\documentclass[11pt]{article}

\usepackage{epsfig}
\usepackage{amsfonts}
\usepackage{amssymb}
\usepackage{amsthm}
\usepackage{amstext}
\usepackage{amsmath}
\usepackage{xspace}
\usepackage{graphicx}
\usepackage{fullpage}
\usepackage{graphics}
\usepackage{colordvi}
\usepackage{color}

\textheight 9.3in \advance \topmargin by -0.9in \textwidth 6.5in
\advance \oddsidemargin by -0.8in
\newcommand{\myparskip}{3pt}
\parskip \myparskip
\oddsidemargin 0in \evensidemargin 0in \marginparwidth 40pt
\marginparsep 10pt \topmargin 0pt \headsep 0in \headheight 0in
\textheight 8.5in \textwidth 6.5in \brokenpenalty=10000




\newcommand{\yi}{{\sc Yes-Instance}\xspace}
\renewcommand{\ni}{{\sc No-Instance}\xspace}



\newcommand{\ceil}[1]{\ensuremath{\left\lceil#1\right\rceil}}

\newcommand{\abs}[1]{\lvert #1\rvert}
\newcommand{\paren}[1]{\left ( #1 \right ) }


\newcommand{\NP}{\mbox{\sf NP}}
\newcommand{\APX}{\mbox{\sf APX}}

\newcommand{\opt}{\mbox{\sf OPT}}

\newcommand{\set}[1]{\left\{ #1 \right\}}
\newcommand{\sse}{\subseteq}

\newcommand{\tset}{{\mathcal T}}

\newcommand{\qset}{{\mathcal{Q}}}

\newcommand{\aset}{{\mathcal{A}}}

\newcommand{\rset}{{\mathcal{R}}}

\newcommand{\sset}{{\mathcal{S}}}



\newtheorem{theorem}{Theorem}[section]
\newtheorem{lemma}[theorem]{Lemma}

\newtheorem{corollary}[theorem]{Corollary}
\newtheorem{claim}[theorem]{Claim}

\def\square{\vbox{\hrule height.2pt\hbox{\vrule width.2pt height5pt \kern5pt
\vrule width.2pt} \hrule height.2pt}}

\theoremstyle{definition}\newtheorem{definition}[theorem]{Definition}


\newenvironment{prog}[1]{
\begin{minipage}{5.8 in}
{\sc\bf #1}
\begin{enumerate}}
{
\end{enumerate}
\end{minipage}
}


\renewcommand{\phi}{\varphi}
\newcommand{\eps}{\epsilon}

\newcommand{\expect}[2]{\text{\bf E}_{#1}\left [#2\right]}

\newcommand{\maxsatf}{\mbox{\sf Max 3SAT(5)} }
\newcommand{\threesat}{\mbox{\sf Max 3SAT} }
\newcommand{\twosat}{\mbox{\sf Max 2SAT} }

\newcommand{\len}{\operatorname{len}}
\newcommand{\rev}{\operatorname{rev}}

\newcommand{\events}{\mathcal E}

\def\nu#1{}
\def\ed#1{}
\def\bun#1{}


\usepackage{times}
\usepackage[small,compact]{titlesec}
\usepackage[small,it]{caption}

\newcommand{\squishlist}{
 \begin{list}{$\bullet$}
  { \setlength{\itemsep}{0pt}
     \setlength{\parsep}{3pt}
     \setlength{\topsep}{3pt}
     \setlength{\partopsep}{0pt}
     \setlength{\leftmargin}{1.5em}
     \setlength{\labelwidth}{1em}
     \setlength{\labelsep}{0.5em} } }
\newcommand{\squishend}{
  \end{list}  }

\newcounter{Lcount}
\newcommand{\squishlisttwo}{
\begin{list}{D\arabic{Lcount}. }
{ \usecounter{Lcount} \setlength{\itemsep}{0pt}
\setlength{\parsep}{0pt} \setlength{\topsep}{0pt}
\setlength{\partopsep}{0pt} \setlength{\leftmargin}{2em}
\setlength{\labelwidth}{1.5em} \setlength{\labelsep}{0.5em} } }

\newcommand{\squishendtwo}{
\end{list} }

\newcommand{\be}{\squishlisttwo}
\newcommand{\ee}{\squishendtwo}
\newcommand{\bi}{\squishlist}
\newcommand{\ei}{\squishend}

\title{Stackelberg Pricing is Hard to Approximate within $2-\epsilon$}

\author{
Parinya Chalermsook \thanks{Department of Computer Science,
 University of Chicago, Chicago, IL, USA. Email:
 {\tt parinya@uchicago.edu}}
\and
Bundit Laekhanukit \thanks{Department of Combinatorics and
  Optimization, University of Waterloo, ON, Canada. Email: {\tt
    blaekhan@uwaterloo.ca}}
\and Danupon Nanongkai \thanks{College of Computing,  Georgia Tech,
Atlanta, GA, USA. Email: {\tt danupon@cc.gatech.edu}}}
\date{}

\begin{document}



\maketitle

\begin{abstract}
Stackelberg Pricing Games is a two-level combinatorial pricing
problem studied in the Economics, Operation Research, and Computer
Science communities. In this paper, we consider the decade-old
shortest path version of this problem which is the first and most
studied problem in this family.

The game is played on a graph (representing a network) consisting of
{\em fixed cost} edges and {\em pricable} or {\em variable cost}
edges. The fixed cost edges already have some fixed price
(representing the competitor's prices). Our task is to choose prices
for the variable cost edges. After that, a client will buy the
cheapest path from a node $s$ to a node $t$, using any combination of
fixed cost and variable cost edges. The goal is to maximize the
revenue on variable cost edges.

In this paper, we show that the problem is hard to approximate
within $2-\epsilon$, improving the previous \APX-hardness result by
Joret [to appear in {\em Networks}]. Our technique combines the
existing ideas with a new insight into the price structure and its
relation to the hardness of the instances.
\end{abstract}


\section{Introduction}
\label{sec:intro}

A newly startup company has just acquired some links in a network.
The company wants to sell these links to a particular client, who
will buy a cheapest path from a node $s$ to a node $t$. However,
this company is not alone in the market: there are other companies
already in the market owning some links with some fixed prices. The
goal of this new company is to price its links to maximize its
profit, having the complete knowledge of the network and knowing
that the client will buy the cheapest $s$-$t$ path (which may
consist of links from many companies). Of course, if they price a
link too high, the client will switch to other links and if they
price a link too low then they unnecessarily reduce their profit.

This problem is called the {\em Stackelberg Shortest Path Game}
({\sc StackSP}) and can be defined formally as follows. We are given
a directed graph $G=(V,E)$, a source vertex $s$ and a sink vertex
$t$. The set $E$ of edges is partitioned into two sets: $E_f$, the
set of {\em fixed cost edges}, and $E_v$, the set of {\em pricable}
or {\em variable cost} edges. Each edge $e$ in $E_f$ already has
some price $p(e)$. Our task is to set a price $p(e)$ to each
variable cost edge $e$. Once we set the price, the client will buy a
shortest path from $s$ to $t$ (i.e., a path $P$ such that
$\sum_{e\in P} p(e)$ is minimized). Our goal is to maximize the
profit; i.e., maximize $\sum_{e\in P\cap E_v} p(e)$ where $P$ is the
path bought by the client. Throughout, we let $m$ denote the number
of variable cost edges.
It is usually assumed that if there are many shortest paths, the
client will buy the one that maximizes our profit.


Due to its connection to road network tolling and bilevel
programming, there is an enormous effort in understanding the
problem by means of bilevel programming \cite{LabbeMS98,
dewez2004thesis, DMS06, HeilpornLMS06, HeilpornLMS07, HLMS07,
DewezLMS08, BouhtouEM06}, finding polynomial-time solvable cases
\cite{LabbeMS98, vHoeselKMOB03, GrigorievHKUB04, vdKraaij-thesis04,
CardinalLLP05, BGvHvdKSU07, BriestHK08}, solving the problem by
heuristics~\cite{DimitriouTS08, DimitriouT09}, and approximating the
solution~\cite{RochSM05, BriestHK08, Joret08, vanHoesel08}. In this
paper, we focus on approximability of this problem. In this realm,
{\sc StackSP} is the first and the most studied problem in the
growing family of one-follower (i.e., one client) Stackelberg
network pricing games~\cite{RochSM05, BriestHK08, Joret08,
vanHoesel08, BiloGPW08, CardinalDFJLNW07, CardinalDFJNW09,
BriestHGV09}.

The Stackelberg pricing problems belong to the class of two-player
two-level optimization problems which is a subclass of the bilevel
linear programming. These problems have a rather strange structure,
and this makes the standard approximation techniques such as linear
programming seemingly inapplicable. For example, a natural LP
formulation for {\sc StackSP} (and also another version called {\sc
StackMST}) has an integrality gap of $\Omega(\log m)$. Moreover, by
using the most (and probably the only) natural upper bound for
$\opt$, one cannot obtain approximation factor better than $O(\log
m)$ \cite{RochSM05}, so the line of attacks considered in
\cite{RochSM05} and \cite{BriestHK08} cannot be pushed any further.

Proving the hardness of this problem seems to have an equally big
obstacle. In fact, the progress on the hardness side for the family
of Stackelberg pricing problems stops at small constant hardness
(\APX-hardness in \cite{Joret08, CardinalDFJLNW07} and only
\NP-hardness in \cite{BiloGPW08}). Moreover, a reduction from Unique
Coverage problem~\cite{DBLP:journals/siamcomp/DemaineFHS08}, which
proved useful for many pricing problems (including {\sc StackSP}
with multiple followers) apparently does not apply here.
In particular, for {\sc StackSP}, only \NP-hardness, strong
\NP-hardness, and \APX-hardness (with a constant as small as
$1.001$) are shown~\cite{LabbeMS98, RochSM05, Joret08}. In fact,
even for approximating the general bilevel program, only the
constant ratio can be ruled out~\cite{GassnerK09, Jeroslow85}.


We believe that an improvement to upper or lower bound of the
problem might shed some light on approximating a larger subclass of
bilevel programs, perhaps generating a new set of techniques for
attacking the whole family of Stackelberg problems. (The problem
seems to require a new technique due to its bizarre behavior.)






\paragraph{Our result and techniques}
In this paper, we give the first result beyond a very small constant
hardness:

\begin{theorem}\label{thm:maintheorem}
For any $\eps>0$, it is NP-hard to approximate {\sc StackSP} to
within a factor of $2-\eps$.
\end{theorem}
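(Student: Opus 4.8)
The plan is to give a gap-\emph{creating} reduction from \sat (or a bounded-occurrence variant such as \GAPSAT), reusing the general shape of Joret's \APX-hardness construction but re-engineering the gadgets so that the multiplicative gap tends to $2$. Given a formula $\phi$ with $n$ variables and $m$ clauses, I would build a directed graph $G_\phi$ with source $s$ and sink $t$ consisting of a long ``main corridor'' $s=v_0\to v_1\to\cdots\to v_\ell=t$ together with fixed ``escape'' edges, and a scale parameter $W$ (a polynomial in $n$) controlling the amount of money in play. The corridor has two consecutive phases. In the first phase the pricer's price vector on the variable edges encodes a truth assignment: for each variable $x_i$ there is a small gadget with a ``true branch'' and a ``false branch'', and pricing one branch high commits to the other; along this phase the pricer can collect a total of about $W$. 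Interleaved with the variable gadgets are fixed escape edges, one per clause $C$, each leading from a corridor vertex directly to $t$, with length chosen so that the escape edge of $C$ becomes a \emph{strict} shortcut precisely when the assignment encoded so far falsifies $C$. The second phase is a ``revenue block'': a run of variable edges equipped with cheap fixed alternatives so that the pricer can again charge about $W$ in total there. The crucial design feature is that the revenue block lies past every escape, so taking \emph{any} escape forfeits the whole block.

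\textbf{Completeness.} If $\phi$ has a satisfying assignment $a$, I exhibit the pricing that encodes $a$ in the first phase and charges each revenue-block edge up to its fixed alternative. Because $a$ satisfies every clause, no escape edge is a shortcut, so some cheapest $s$-$t$ path runs the entire corridor; breaking ties in our favour, the client buys it and our revenue is about $2W$. One also checks that $G_\phi$ has no all-fixed $s$-$t$ path of cost below $2W$, so the pricing is feasible. Hence $\opt(G_\phi)\ge 2W-o(W)$.

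\textbf{Soundness, via a price-structure lemma.} This is the technical heart and where the new insight enters. One must show that if $\phi$ is unsatisfiable then \emph{every} pricing earns at most $W+o(W)$. The obstacle is the pricer's freedom: nothing forces prices to be $0$ or $\infty$, and seemingly clever ``fractional'' prices might appear to let the pricer collect in the first phase \emph{and} in the revenue block while keeping the corridor cheap. The key lemma is a normalization statement: some near-optimal pricing can be put in a canonical form in which the variable edges actually paid for by the client form a nested family of ``budget intervals'' along its path, each charged exactly up to the cheapest fixed detour spanning it, and in which every first-phase gadget is integral (one branch finite, the other effectively infinite). In canonical form the first-phase state is a genuine assignment; since $\phi$ is unsatisfiable it falsifies some clause, whose escape edge is then a strict shortcut, so the client's bought path uses that escape (or something no more expensive) and never reaches the revenue block. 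Thus all revenue lives in the first phase and is at most $W+o(W)$. Proving that passing to canonical form costs only lower-order revenue is the step I expect to be hardest, since that is exactly where the bilevel structure of {\sc StackSP} has to be controlled.

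\textbf{Amplification to $2-\eps$.} Combining the two bounds, a satisfiable $\phi$ yields $\opt(G_\phi)\ge 2W-o(W)$ and an unsatisfiable $\phi$ yields $\opt(G_\phi)\le W+o(W)$. Taking $W$ to be a polynomial in $n$ that grows without bound, the ratio exceeds $2-\eps$ for all large enough $n$; since \sat is \NP-hard on instances with arbitrarily many variables, a $(2-\eps)$-approximation for {\sc StackSP} would decide \sat, proving Theorem~\ref{thm:maintheorem}.
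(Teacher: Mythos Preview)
Your plan is a genuinely different route from the paper's, and the step you flag as ``hardest'' is in fact the entire proof; as written, the soundness direction is a gap.

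The paper does \emph{not} use a two-phase corridor with a separate revenue block. It reduces from the Raz verifier for \maxsatf with $\ell$ repetitions (so the underlying CSP has soundness $2^{-\alpha\ell}\le\eps/3$), lays one gadget per constraint in a chain of length $M$, and connects inconsistent answer-pairs across gadgets $i<j$ by fixed ``shortcut'' edges of \emph{non-uniform} cost $(j-i)/2$. The factor $2$ comes entirely from this price: a shortcut contained in the client's path eats exactly half the span it bridges, and a shortcut merely induced by the path caps the revenue on that span at half its length plus $O(1)$. To kill the additive $O(1)$ the constraints are first permuted into a $(\delta,\gamma)$-far sequence so that every surviving shortcut spans at least $\delta M$ gadgets. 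Soundness is then proved by a \emph{Path Decomposition} theorem that carves the actual optimal shortest path into source--sink subpaths: those containing or inducing a shortcut land in a set $\rset'$ with total revenue at most $M/2+O(1/\delta)$, while the remaining pieces in $\rset$ have per-edge price at most $1$ and involve no shortcuts, hence decode a consistent partial assignment. If revenue exceeded $(1/2+\eps)M$, the $\rset$-pieces would satisfy more than an $\eps/3$ fraction of the Raz constraints, contradicting soundness. Note that the very low CSP soundness is load-bearing in this argument; this is why parallel repetition is used rather than plain \sat.

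Your scheme instead wants the gap to come from a revenue block worth $W$ that is forfeited whenever a single clause is falsified, via an escape edge whose tail lies on the client's path only under the falsifying branch pattern. Two things are missing. First, the construction is under-specified exactly where it matters: for a clause on three variables you need one corridor vertex visited iff all three falsifying branches are taken, but with independent per-variable gadgets in series no such vertex exists; once you add gadgets that ``remember'' a tuple of branch choices and wire consistency edges between copies of the same variable, you have rebuilt the constraint-gadget-plus-shortcut picture and must decide what price to put on those edges --- which is precisely the paper's $(j-i)/2$ insight. Second, your canonical-form lemma is asserted, not argued, and it carries the whole weight. You must rule out a pricer who underprices the first phase (so no escape is a strict shortcut) or prices both branches of a variable (so the client's path never commits) while still collecting from the block; nothing in the sketch controls this, and ``passing to canonical form costs only lower-order revenue'' is exactly the bilevel obstruction that makes {\sc StackSP} hard to analyze. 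The paper sidesteps normalization entirely: Path Decomposition bounds the revenue on the \emph{actual} client path piece by piece, using the $(j-i)/2$ shortcut price to force every bad piece to pay half its length. Until you either supply a concrete escape-vertex construction together with a proof of the normalization lemma, or switch to the non-uniform-shortcut mechanism, the \ni direction remains open.
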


The key insight in obtaining this result comes from exploring the
structure of the edge prices which was not exploited in the previous
inapproximability results~\cite{LabbeMS98, RochSM05, Joret08}:
The previous results encode the constraints in the constraint
satisfaction problems (\threesat in their cases) using certain
gadgets and glue these gadgets together in a uniform way (i.e.,
using the same edge price throughout).
%
However, we study the influence of non-uniform prices to the
hardness of the resulting instances. In particular, we study how the
prices of the fixed cost edges affect the hardness of the gadgets
and found an optimal price which strikes a balance between being too
high (which could hugely reduce the revenue but is easy to avoid)
and too low (which is likely to be used but do not affect the
revenue much).
This observation, armed with a stronger constraint satisfaction
problem (i.e., Raz verifier for \maxsatf) and a right parameter of
price, leads to a $(2-\eps)$-hardness of approximation.
The techniques above are strong enough that the hardness result is
obtained with only a slight modification of the gadgets. However,
due to the non-uniformity of the prices, a more sophisticated
analysis is required.
In particular, our analysis relies on a technique called {\em Path
Decomposition} which breaks the shortest path in the optimal
solution into subpaths with manageable structure. We will be able to
get deeper into the intuition after we describe the hardness
construction in the next section.



%

\subsubsection*{Related work}
{\sc StackSP} is first proposed by Lab\'{b}e et al.~\cite{LabbeMS98}
who also derive a bilevel LP formulation of the problem and prove
\NP-hardness. On the algorithmic side, Roch et al. present the
first, and still the best, approximation algorithm which attains
$O(\log m)$ approximation factor. Another $O(\log m)$ approximation
algorithm is obtained by Briest et al, which has a slightly worse
approximation guarantee (larger constant in front of $\log m$ term)
but is simpler and applicable to a much richer class of Stackelberg
pricing problems. Even though the algorithm of Briest et al. does
not rely on the specific problems' structures, it remains unclear
whether one can exploit a special structure of each problem to
improve the approximation ratio.

Another interesting problem in the family of one-follower
Stackelberg network games is {\em Stackelberg Minimum Spanning Trees
Game} ({\sc StackMST}) in which the client aims to buy the minimum
spanning tree instead of the shortest path. Cardinal et
al.~\cite{CardinalDFJLNW07} introduce this problem and prove that it
is \APX-hard but has an $O(\log m)$ approximation algorithm. Very
recently, they consider the special cases of planar and
bounded-treewidth graphs~\cite{CardinalDFJNW09} and prove that even
in such graph classes, {\sc StackMST} remains \NP-hard.
There are also many other variations in the family of Stackelberg
games, depending on what the client wants to buy. This includes
vertex cover~\cite{BriestHK08, BriestHGV09}, shortest path
tree~\cite{BiloGPW08}, and knapsack~\cite{BriestHGV09}.

Among the known approximation algorithms, the most universal one is
an  $O((1+\eps)\log m)$-approximation algorithm invented by Briest
et al.~\cite{BriestHK08}. This elegant algorithm works on a large
class of problems, including {\sc StackSP} and {\sc StackMST} and is
coupled with a simple analysis. In the same paper, the case of $k$
clients is also considered. An $O((1+\eps)(\log m+\log
k))$-approximation algorithm is given, and the problem is shown to
be hard to approximate within $O(\log^\eps m+\log^\eps k)$ for some
large $k$. Therefore, the gap is almost closed in the case of many
clients while left wide open when $k$ is small (e.g. $k$ is
constant, and particularly when $k=1$).

In Economics and Operation Research literature, {\sc StackSP} is
also known as a tarification problem. Many special cases are
considered and polynomial-time algorithms are given for this
problem~\cite{LabbeMS98, vHoeselKMOB03, GrigorievHKUB04,
vdKraaij-thesis04, CardinalLLP05, BGvHvdKSU07, BriestHK08}. It is
also sometimes called a bilevel pricing problem due to its
connection to the bilevel linear program. (See a formulation in,
e.g., \cite{LabbeMS98}.) {\sc StackSP} is also heavily studied from
this perspective~\cite{LabbeMS98, dewez2004thesis, DMS06,
HeilpornLMS06, HeilpornLMS07, HLMS07, DewezLMS08, BouhtouEM06}.
Approximating a solution of bilevel program to within any constant
factor is shown to be \NP-hard~\cite{Jeroslow85, GassnerK09}.
Unfortunately, these reductions do not extend to the family of
Stackelberg games due to specific structures of the constraints used
in the reduction of \cite{Jeroslow85, GassnerK09}. For more details,
we refer the readers to \cite{vanHoesel08, DewezLMS08, GassnerK09}
and references therein.

%
\paragraph{Remark} Recently Briest and Khanna~\cite{BK09} discover a
similar result to ours using a different approach. They show that
{\sc StackSP} is hard to approximate within a factor of $2-o(1)$.

\paragraph{Organization}

Our construction is a reduction from Raz verifier for \maxsatf. We
first give an overview of Raz verifier in Section~\ref{sec:prelim}.
We then describe our reduction in Section~\ref{sec:reduction} before
we are able to give more intuition behind the construction and its
analysis. This will be done in Section~\ref{sec:intuition}. We then
show a formal analysis in Section~\ref{sec:analysis}.

\section{Raz Verifier}
\label{sec:prelim}

Our reduction uses the Raz verifier for \maxsatf with $\ell$
repetitions. We explain this framework in this section.
%
The given instance of \maxsatf is a 3CNF formula with $n$
variables and $5n/3$ clauses where each clause contains exactly
$3$ different literals, and each variable appears in exactly $5$
different clauses.

Let $\eps$ be a constant and let $\phi$ be an instance of \maxsatf.
Then $\phi$ is called a \yi if there is an assignment that satisfies
all the clauses, and it is called a \ni if any assignment satisfies
at most $(1-\eps)$-fraction of the clauses. The following is a form
of the PCP theorem.

\begin{theorem}
\label{theorem: PCP theorem} There is a constant $\eps: 0 < \eps
<1$, such that it is NP-hard to distinguish between \yi and \ni of
the \maxsatf problem.
\end{theorem}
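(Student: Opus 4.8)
The plan is to derive this as a standard corollary of the basic PCP theorem, in its ``gap-3SAT'' form, together with a degree-reduction reduction; the extra regularity (exactly three \emph{distinct} literals per clause, exactly five occurrences per variable) will be enforced by padding and by expander-based consistency gadgets. First I would start from the $\PCP$ theorem in the form: there is a constant $\eps_0>0$ such that it is $\NP$-hard to distinguish satisfiable 3CNF formulas from 3CNF formulas in which every assignment falsifies at least an $\eps_0$ fraction of the clauses (equivalently, $\NP\subseteq\PCP(O(\log n),O(1))$ with perfect completeness). Short clauses are then padded up to exactly three literals using fresh variables, and a repeated literal inside a clause is replaced by a fresh variable, so that every clause has exactly three distinct literals; this costs only a constant factor in the gap.

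Next, to control the number of occurrences of each variable, I would use the classical expander trick. For a variable $x$ occurring in $k$ clauses, create $k$ fresh copies $x_1,\dots,x_k$, let the $i$th occurrence of $x$ use $x_i$, fix a constant-degree expander $H$ on $\{1,\dots,k\}$ with edge-expansion at least some fixed constant $h>0$, and for each edge $\{i,j\}$ of $H$ add the two clauses $(x_i\vee\bar x_j)$ and $(\bar x_i\vee x_j)$, which together encode $x_i=x_j$. Doing this for every variable yields a new formula $\phi'$ in which each variable occurs only a constant number of times; one further round of local padding (trivial clauses with fresh dummy variables) brings every occurrence-count to exactly $5$ and restores ``exactly three distinct literals per clause'', at another constant-factor loss. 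Explicit constant-degree expanders are computable in polynomial time, so the whole reduction runs in polynomial time.

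Finally I would verify that the gap survives. Completeness is immediate: a satisfying assignment of $\phi$ extends to $\phi'$ by giving every copy $x_i$ the value of $x$, which satisfies all consistency clauses. For soundness, take any assignment to $\phi'$ and any original variable $x$, and partition $\{1,\dots,k\}$ into the indices where $x_i$ is true and where it is false; the number of consistency clauses of the $x$-gadget that are violated is exactly the number of $H$-edges crossing this partition, which by edge-expansion is at least $h$ times the size of the smaller side. Hence flipping the copies on the minority side of each gadget to the majority value produces a \emph{consistent} assignment, changing at most $1/h$ times as many literal occurrences as there were violated consistency clauses. A consistent assignment induces an assignment of $\phi$; if $\phi$ is a \ni, that induced assignment falsifies an $\eps_0$ fraction of the original clauses, and tracing the bound back, the original assignment to $\phi'$ must already falsify a constant fraction of the clauses of $\phi'$. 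So $\phi'$ is a \ni for some constant $\eps>0$, which proves the theorem. The main obstacle is making this last accounting quantitatively tight --- i.e., guaranteeing that the loss from degree reduction and padding is a constant factor independent of the formula size, which is precisely what the edge-expansion of $H$ buys; choosing the degree of $H$ and arranging the padding so that the final counts land on exactly $(3,5)$ is routine but fiddly bookkeeping.
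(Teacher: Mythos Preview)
Your proof sketch is correct and is the standard derivation of the bounded-occurrence gap version of 3SAT from the PCP theorem --- essentially the Papadimitriou--Yannakakis expander-based degree reduction, followed by local padding to hit the exact parameters $(3,5)$. The accounting you worry about at the end does go through: the edge-expansion of $H$ ensures that the number of consistency clauses violated dominates (up to a constant) the number of original clauses whose status can change when you flip minorities to majorities, so the gap shrinks only by a constant factor.

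That said, the paper does not prove this theorem at all. It is stated there as a black box --- ``a form of the PCP theorem'' --- and used without further justification. So there is no ``paper's own proof'' to compare against; you have supplied a proof where the authors simply invoked a known result. Your write-up is more than adequate for that purpose, though in context a one-line citation to the PCP theorem together with the standard degree-reduction references would also suffice.
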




Raz verifier for \maxsatf with $\ell$ repetitions is a two-provers
one-round interactive proof system. The verifier sends one query to
each prover simultaneously. The first prover is asked for an
assignment to the variables in the given clauses while the second
prover is asked for an assignment of the variables that satisfies
all the given clauses. The verifier will accept the answers if and
only if both provers return consistent assignments. The detailed
description of the provers-verifier actions is as follows.

\bi

\item The verifier first chooses $\ell$ clauses, say
$C_1,\ldots,C_\ell$, independently and uniformly at random (with
replacement). Next, choose one variable in each of these clauses
uniformly at random. Let $x_1,\ldots,x_\ell$ denote the resulting
(not necessarily distinct) variables.

\item The verifier generates a query $q$ consisting of the indices of
  $C_1,C_2,\ldots,C_\ell$ and a query $q'$ consisting of the indices of
  $x_1,x_2,\ldots,x_\ell$. The verifier then sends $q$ and $q'$ to
  Prover~1 and Prover~2, respectively.

\item Prover~1 returns an assignment to all variables associated with
  clauses $C_1,C_2,\ldots,C_\ell$.

\item Prover~2 returns an assignment to variables
  $x_1,x_2,\ldots,x_\ell$.

\item The verifier reads the assignment received from both provers and
  accepts if and only if the assignments are consistent and satisfy
  $C_1,C_2,\ldots,C_\ell$.

\ei

Intuitively, for the \yi, both provers can ensure that the verifier
always accepts by returning the satisfying assignments to the
prover. On the other hand, any provers' strategy fails with high
probability in the case of \ni. This is an application of the
Parallel Repetition Theorem and Theorem~\ref{theorem: PCP theorem}
and can be stated formally as follows.

\begin{theorem}[~\cite{Raz98,AroraLMSS98}]
\label{theorem: Raz} There exists a universal constant $\alpha> 0$
(independent of $\ell$) such that

\bi

\item If $\phi$ is a \yi, then there is a strategy of the provers
that makes the verifier accepts with probability $1$.

\item If $\phi$ is a \ni, for any provers' strategy, the verifier
will accept with probability at most $2^{-\alpha \ell}.$
\ei
\end{theorem}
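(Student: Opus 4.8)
The plan is to separate the two directions. The \yi direction is immediate: if $\phi$ is satisfiable, fix one satisfying assignment $\sigma^*$ and have both provers answer according to $\sigma^*$ — Prover~1 reporting the restriction of $\sigma^*$ to all variables appearing in $C_1,\dots,C_\ell$, and Prover~2 reporting $\sigma^*(x_1),\dots,\sigma^*(x_\ell)$. The answers are consistent on every queried variable and $\sigma^*$ satisfies every clause, so the verifier accepts with probability $1$ regardless of its random choices.

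For the \ni direction, I would first observe that the $\ell$-repetition verifier described above is \emph{exactly} the $\ell$-fold parallel repetition of the one-round game $G$ in which the verifier picks a uniformly random clause $C$ together with a uniformly random variable $x\in C$, sends $C$ to Prover~1 and $x$ to Prover~2, receives a local assignment $a$ to the three variables of $C$ and a value $b$ for $x$, and accepts iff $a$ satisfies $C$ and $a$ agrees with $b$ on $x$. (For fixed $a$ the set of accepting $b$'s is empty or a singleton, so $G$ is a projection game.) Hence, by the Parallel Repetition Theorem of~\cite{Raz98}, it suffices to bound the value of the single game $G$ strictly below $1$ by an absolute constant: for a two-prover one-round game of value $1-\gamma<1$ over a constant-size answer alphabet, the $\ell$-fold parallel repetition has value at most $2^{-c\ell}$ for a constant $c>0$ depending only on $\gamma$ and the alphabet size.

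I would then bound $\mathrm{val}(G)$ using Theorem~\ref{theorem: PCP theorem}. Take an optimal pair of prover strategies, which may be assumed deterministic. Then Prover~2's strategy is a map from variables to Boolean values, i.e.\ a global assignment $\sigma$, while Prover~1's strategy is a map from clauses to local assignments. Since $\phi$ is a \ni, the set $B$ of clauses falsified by $\sigma$ has density at least $\eps$. Conditioned on the verifier's clause $C$ lying in $B$: either Prover~1's answer $a$ fails to satisfy $C$ (and the verifier rejects), or $a$ satisfies $C$ while $\sigma$ does not, forcing $a$ and $\sigma$ to disagree on at least one of the three variables of $C$, whence the uniformly random $x\in C$ hits a disagreement — and the verifier rejects for inconsistency — with probability at least $1/3$. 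Either way the verifier rejects with probability at least $\eps/3$ conditioned on $C\in B$, so $\mathrm{val}(G)\le 1-\eps/3 < 1$.

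Combining, the one-round gap $\gamma=\eps/3$ is the absolute constant supplied by the PCP theorem, and the one-round answers are assignments to at most three Boolean variables, so the answer alphabet has size at most $8$; the resulting constant $c$ in the parallel repetition bound therefore depends on no parameter other than universal constants, and we may set $\alpha := c$, which is independent of $\ell$. I expect the only genuinely deep ingredient to be the Parallel Repetition Theorem itself, which I would invoke as a black box from~\cite{Raz98,AroraLMSS98}; the work actually carried out in the proof is the (easy) verification that the $\ell$-repetition verifier is literally a parallel repetition of a clean one-round game, together with the short one-round soundness argument above. The only subtle points to be careful about are the reduction to deterministic prover strategies and the fact that the soundness constant $\eps$ of \maxsatf is absolute, so that $\alpha$ is genuinely $\ell$-independent.
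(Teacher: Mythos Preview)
Your proposal is correct and matches the paper's own treatment: the paper does not give a proof of this theorem but simply states that it ``is an application of the Parallel Repetition Theorem and Theorem~\ref{theorem: PCP theorem}'' and cites~\cite{Raz98,AroraLMSS98} as a black box. Your write-up fills in precisely those details --- the immediate \yi case, the one-round soundness bound $\mathrm{val}(G)\le 1-\eps/3$ via the PCP gap, and the invocation of parallel repetition on a constant-alphabet projection game --- so there is nothing to compare beyond noting that you have supplied the standard argument the paper leaves implicit.
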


In our reduction, we view Raz verifier as the following constraint
satisfaction problem. We have two sets of queries, $\qset_1$ and
$\qset_2$, corresponding to all possible queries sent to Prover 1
and Prover 2, respectively. That is, $\qset_1$ consists of all
possible choices of $\ell$ clauses sent to Prover 1 (hence,
$|\qset_1|=(5n/3)^\ell$) and $\qset_2$ consists of all possible
choices of $\ell$ variables sent to Prover 2 (hence
$|\qset_2|=n^\ell$). For each $q\in \qset_1\cup\qset_2$, let $A(q)$
denote the set of all possible answers to $q$. Notice that
$|A(q)|=7^\ell$ if $q\in \qset_1$ (since there are $7$ ways to
satisfy each of the $\ell$ clauses given to Prover 1) and
$|A(q)|=2^\ell$ if $q\in \qset_2$ (since there are 2 possible
assignment to each of the $\ell$ variables given to Prover 2).
Denote by $\aset_1$ and $\aset_2$ the set of all possible answers by
Prover 1 and Prover 2, respectively.

We denote the set of constraints by $\Phi$. Each constraint in
$\Phi$ corresponds to a pair $(q_1,q_2)$ of queries sent by the
verifier. That is, for each random string $r$ of the verifier,
there is a constraint $(q_1, q_2)\in \qset_1\times \qset_2$ in
$\Phi$ where $q_1$ and $q_2$ are queries sent to Prover~1 and
Prover~2 respectively. A constraint $(q_1, q_2)$ is satisfied if
and only if the assignments to $q_1$ and $q_2$ are consistent. For
convenience, we will treat $\Phi$ as the set of all possible
random strings, and we denote, for each random string $r$, the
corresponding queries by $q_1(r)$ and $q_2(r)$ respectively. Note
that each query $q \in \qset_1$ is associated with $3^{\ell}$
constraints in $\Phi$ and each query $q' \in \qset_2$ with
$5^{\ell}$ constraints. Moreover, let $M= \abs{\Phi}$. We have
$M=(5n)^{\ell}$. The goal of this problem is to find an assignment
$f: \qset_1 \rightarrow \aset_1, \qset_2 \rightarrow \aset_2$ that
maximizes the number of satisfied constraints in $\Phi$.


The following corollary can be directly obtained from
Theorem~\ref{theorem: Raz}.

\begin{corollary}
If $\phi$ is a \yi, then there is an assignment to $\qset_1 \cup
\qset_2$ such that all constraints in $\Phi$ are satisfied.
Otherwise, no assignment satisfies more than $2^{-\alpha
\ell}$-fraction of the constraints in $\Phi$.
\end{corollary}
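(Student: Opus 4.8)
The plan is to establish a direct dictionary between provers' strategies in the two-prover one-round protocol and assignments $f$ in the CSP on $(\Phi, \qset_1, \qset_2, \aset_1, \aset_2)$, under which ``probability that the verifier accepts'' becomes ``fraction of constraints of $\Phi$ satisfied by $f$''. The corollary then follows by plugging this identity into Theorem~\ref{theorem: Raz}.

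First I would set up the dictionary. A deterministic strategy for Prover~1 is, by definition, a rule that on receiving a query $q\in\qset_1$ outputs an answer; since $A(q)$ is exactly the set of legal answers (the $7^\ell$ satisfying assignments to the $\ell$ clauses encoded by $q$), such a strategy is precisely a function $f_1:\qset_1\to\aset_1$ with $f_1(q)\in A(q)$ for all $q$. Likewise a deterministic strategy for Prover~2 is a function $f_2:\qset_2\to\aset_2$ with $f_2(q')\in A(q')$. Together they form exactly an assignment $f$ as in the definition of the CSP, and conversely every such $f$ yields a pair of deterministic prover strategies. It therefore suffices to reason about deterministic strategies: a randomized provers' strategy is a distribution over deterministic ones, so its acceptance probability is the average of the acceptance probabilities of the deterministic strategies in its support, and in particular lies between their minimum and maximum.

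Next I would identify the two quantities. The verifier draws a uniformly random string $r$; by construction this selects the constraint $r\in\Phi$ together with its associated queries $q_1(r),q_2(r)$, sends $q_1(r)$ to Prover~1 and $q_2(r)$ to Prover~2, and accepts iff the two returned answers are consistent and satisfy $C_1,\dots,C_\ell$. But ``consistent and satisfying'' is exactly the definition of the constraint $r=(q_1(r),q_2(r))$ being satisfied by $(f_1,f_2)=f$: the ``satisfying'' part is automatic because $f_1(q_1(r))\in A(q_1(r))$, and consistency is the content of the constraint. Hence, for every deterministic strategy $f$,
\[
\pr{r}{\text{verifier accepts}}
=\frac{\bigl|\{\,r\in\Phi : f \text{ satisfies } r\,\}\bigr|}{|\Phi|},
\]
which is precisely the fraction of constraints of $\Phi$ satisfied by $f$ (recall $\Phi$ is treated as the set of random strings, so both sides average uniformly over $r$).

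Finally I would invoke Theorem~\ref{theorem: Raz}. If $\phi$ is a \yi, the theorem gives a provers' strategy accepted with probability $1$; since acceptance is a $0/1$ event whose average is $1$, every deterministic strategy in its support is accepted with probability $1$, and by the displayed identity the corresponding assignment $f$ satisfies all of $\Phi$. If $\phi$ is a \ni, the theorem bounds the acceptance probability of \emph{every} strategy, hence of every deterministic strategy, hence (again by the identity) the fraction of $\Phi$ satisfied by every assignment, by $2^{-\alpha\ell}$. The only points requiring care are the routine verifications that the CSP's query/answer sets and constraint relation match the protocol verbatim --- in particular that Prover~1's answer space is taken to be the set of satisfying assignments, so that the clause-satisfaction test is subsumed --- together with the deterministic-versus-randomized reduction above; beyond that no estimates are needed, so there is no real obstacle.
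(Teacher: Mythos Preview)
Your proposal is correct and is essentially the approach the paper has in mind: the paper does not write out a proof at all, merely stating that the corollary ``can be directly obtained from Theorem~\ref{theorem: Raz}'', and your dictionary between deterministic prover strategies and CSP assignments --- together with the identity between acceptance probability and fraction of constraints satisfied --- is precisely the routine verification underlying that remark.
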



\section{The Reduction}\label{sec:reduction}

Let $\eps>0$ be a constant from Theorem~\ref{thm:maintheorem}.
Recall that we want to prove $(2-\eps)$-hardness of approximation.


\paragraph{Overview} Starting with an instance $\phi$ of {\sf Max
  3SAT(5)}, we first perform the two-prover protocol with 
$\ell=\ceil{\log(3/\eps)/\alpha}$ rounds, and we enumerate all
possible constraints in $\Phi$. Next we transform $\Phi$ to an
instance of the Stackelberg problem in two steps, as follows. In the
first step of the reduction, we order the constraints in $\Phi$ to
get a {\em $(\delta, \gamma)$-far sequence} (see
Section~\ref{sec:farsequence}). In the second step, we convert such
sequence to an instance of the Stackelberg problem, denoted by $G$,
using the construction explained in Section~\ref{sec:construction}.


%
%

\subsection{Obtaining $(\delta, \gamma)$-far sequence}\label{sec:farsequence}
\begin{definition}($(\delta, \gamma)$-far constraint sequence)
Consider a sequence of all possible constraints $r_1,\ldots, r_M$ in
$\Phi$. 
A constraint $r_i$ is said to be {\em $\delta$-far} if for every $j:
i < j \leq i+ \ceil{\delta M}$, $q_1(r_i) \neq q_1(r_j)$ and
$q_2(r_i) \neq q_2(r_j)$. The sequence $r_1,\ldots, r_M$ is said to
be {\em $(\delta, \gamma)$-far} if at least $(1-\gamma)$-fraction of
constraints is $\delta$-far.
\end{definition}

We can obtain $(\delta, \gamma)$-far sequence with the right
parameter for our purpose using probabilistic
arguments. 

\begin{theorem}\label{theorem: far sequence}
For any $\ell\geq 1$, $\delta > 1/M$ and $\gamma\geq
(8\delta)5^\ell$, there is a polynomial-time algorithm $\mathcal{A}$
that outputs a $(\delta, \gamma)$-far sequence.
\end{theorem}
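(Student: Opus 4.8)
The plan is to use the probabilistic method: choose a uniformly random permutation $\pi$ of the $M$ constraints in $\Phi$ and show that with positive (in fact high) probability the resulting sequence $r_1=\pi^{-1}(1),\ldots,r_M=\pi^{-1}(M)$ is $(\delta,\gamma)$-far. Since a random permutation is sampleable in polynomial time and $(\delta,\gamma)$-farness is checkable in polynomial time, an expected-polynomial-time algorithm $\mathcal A$ follows immediately (and can be made worst-case polynomial by the usual repeat-until-success or by derandomization, but the randomized statement suffices for an NP-hardness reduction; one can also note $\mathcal A$ may simply rerun until it succeeds).

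The heart of the argument is a union bound over the ``bad'' events. Fix a constraint $r$ and call it \emph{bad} (for the chosen permutation) if it fails to be $\delta$-far, i.e.\ there is some $j$ with $i<j\le i+\ceil{\delta M}$, where $i$ is $r$'s position, such that $q_1(r_i)=q_1(r_j)$ or $q_2(r_i)=q_2(r_j)$. First I would count the potential ``collision partners'' of $r$: by the parameters recalled in Section~\ref{sec:prelim}, the query $q_1(r)$ lies in $3^\ell$ constraints of $\Phi$ and $q_2(r)$ lies in $5^\ell$ constraints, so there are at most $3^\ell+5^\ell\le 2\cdot 5^\ell$ constraints $r'\ne r$ sharing a query with $r$. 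For each such partner $r'$, over a uniformly random permutation the probability that $r'$ lands within the $\ceil{\delta M}$ positions immediately after $r$ is at most $\ceil{\delta M}/(M-1)\le 2\delta$ (using $\delta>1/M$ so that $\ceil{\delta M}\le 2\delta M$ and $M/(M-1)\le 2$, with room to spare). Hence $\Pr[r\text{ is bad}]\le (2\cdot 5^\ell)(2\delta)= 4\delta\cdot 5^\ell$. Let $X$ be the number of bad constraints; by linearity, $\mathbb E[X]\le (4\delta 5^\ell)M$. By Markov's inequality, $\Pr[X\ge \gamma M]\le \mathbb E[X]/(\gamma M)\le 4\delta 5^\ell/\gamma\le 1/2$, using the hypothesis $\gamma\ge 8\delta\cdot 5^\ell$. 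Thus with probability at least $1/2$ at most a $\gamma$-fraction of constraints is bad, i.e.\ at least a $(1-\gamma)$-fraction is $\delta$-far, which is exactly the $(\delta,\gamma)$-far condition.

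The main obstacle — really the only place that needs care — is getting the constants right in the collision-probability estimate: one must be slightly careful that the window has length $\ceil{\delta M}$ rather than $\delta M$ and that positions run over $M-1$ remaining slots, not $M$, which is where the hypothesis $\delta>1/M$ is used and why the bound $8$ (rather than $4$) appears in the condition on $\gamma$. Everything else is a routine union bound plus Markov. Finally, to turn the positive-probability statement into the claimed polynomial-time algorithm, $\mathcal A$ samples a random permutation, verifies $(\delta,\gamma)$-farness directly, and repeats if it fails; the success probability $\ge 1/2$ per trial gives expected $O(1)$ trials, each running in time polynomial in $M$.
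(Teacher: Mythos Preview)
Your proposal is correct and is essentially the same argument as the paper's: a uniformly random permutation, the bound $|J|\le 3^\ell+5^\ell<2\cdot 5^\ell$ on collision partners, a $2\delta$ estimate for each pairwise collision, union bound to get $\Pr[r\text{ bad}]\le 4\delta\cdot 5^\ell\le\gamma/2$, and Markov to conclude success probability $\ge 1/2$. The only substantive difference is that the paper actually carries out the derandomization (via conditional expectation on the number of non-$\delta$-far constraints) to obtain a deterministic $\mathcal A$ as the theorem literally asserts, whereas you leave this as a remark and describe a Las Vegas algorithm; since you note that derandomization is available, this is a matter of completeness rather than a gap.
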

\begin{proof}
We present a randomized algorithm here. In Appendix, we derandomize
it to the desired $\mathcal A$ by the method of conditional
expectation.
Let $r_1, r_2, \ldots, r_M$ be the constraints. Let $\mathcal A'$ be
an algorithm that picks random a permutation $\pi: [M] \rightarrow
[M]$. We claim that the sequence $r_{\pi(1)}, \ldots, r_{\pi(M)}$ is
$(\delta,\gamma)$-far with probability at least $1/2$.
%

To prove the above claim, consider each constraint $r_i$. 
Let $J=\{j \in [M]: q_1(r_j)= q_1(r_i) \mbox{ or } q_2(r_j) =
q_2(r_i)\}$. Notice that $\abs{J}\leq 3^{\ell} + 5^{\ell} < 2 \cdot
5^{\ell}$ because there are $3^{\ell}$ constraints $r_j$ in $\Phi$
with $q_1(r_j)=q_1(r_i)$ and $5^{\ell}$ constraints $r_j$ in $\Phi$
with $q_2(r_j)= q_2(r_i)$. For each such $j \in J$, the probability
that $\abs{\pi(i)- \pi(j)} \leq \ceil{\delta M}$ is at most
$2\delta$. By applying the Union bound for all such $j \in J$, the
probability that $r_{\pi(i)}$ is {\em not} $\delta$-far is at most
$(4 \delta) 5^{\ell} \leq \gamma/2$. The expected number of
constraints that are not $\delta$-far is at most $\gamma M/2$, so by
Markov's inequality, the sequence is $(\delta,\gamma)$-far with
probability at least $1/2$, and the claim follows.
\end{proof}

\subsection{The Construction}
\label{sec:construction}

Given a ($\delta, \gamma)$-far sequence of constraints $r_1,
\ldots, r_M$, we construct an instance of {\sc StackSP} as
follows. For each constraint $r_i$, construct a gadget $G_i$
containing source $s_i$, destination $t_i$, and a set of
intermediate vertices $\set{u_i^a, v_i^a}_{a \in A(q_1(r_i))}$.
There are $2 \cdot 7^{\ell}$ such intermediate vertices (since
$\abs{A(q_1(r_i))}=7^\ell$).

Recall that, for each answer $a\in A(q_1(r_i))$, there exists a
unique consistent answer $a'\in A(q_2(r_i))$. In other words, for
each $a\in A(q_1(r_i))$ there exists a unique $a'\in A(q_2(r_i))$
such that $(a, a')$ satisfies the constraint $r_i$. From now on,
we will use $\pi_i$ to denote the function that maps each $a \in
A(q_1(r_i))$ to its consistent answer $a'\in A(q_2(r_i))$.
Therefore, each pair of $u_i^a, v_i^a$ corresponds to a pair of
possible answer $(a, \pi_i(a))$ that satisfies $r_i$.

Edges in each gadget $G_i$ are the following.

\bi

\item {\bf Fixed cost edges:} There is a fixed cost edge of
  cost $1$ from $s_i$ to $t_i$. There are also fixed cost edges of
  cost $0$ from $s_i$ to each of $u_i^a$, and from each of $v_i^a$ to
$t_i$.

\item {\bf Variable cost edges:} There is a variable cost edge
from $u_i^a$ to $v_i^a$ for each $a \in A(q_1(r_i))$.

\ei

Now we link all the gadgets together. First, for all $1 \leq i <
M$, we create a fixed cost edge of cost $0$ from $t_i$ to
$s_{i+1}$. We denote the source of instance $s=s_1$ and the sink
$t=t_M$ (i.e., we want to buy a shortest path from $s_1$ to
$t_M$).

Next, we add another set of fixed cost edges, called {\em
shortcuts}, whose job is to put constraints between pairs of edges
that represent inconsistent assignment. We only have shortcuts
between {\em far} gadgets. (Gadget $G_i$ is called a {\em far}
gadget if its corresponding constraint $r_i$ is a $\delta$-far
constraint.) Consider any pair of far constraints $r_i, r_j$ for
$i < j$ such that $r_i$ shares a query with $r_j$; i.e., either
$q_1(r_i) = q_1(r_j)$ or $q_2(r_i) = q_2(r_j)$. If $q_1(r_i) =
q_1(r_j)$, we add a shortcut from $v_i^{a_i}$ to $u_j^{a_j}$ for
every pair of $a_i\in A(q_1(r_i))$ and $a_j\in A(q_1(r_j))$ such
that $a_i \neq a_j$. For the case when $q_2(r_i)= q_2(r_j)$, we
add a shortcut from $v_i^{a_i}$ to $u_j^{a_j}$ for every pair of
$a_i, a_j$ such that $\pi_i(a_i) \neq \pi_j(a_j)$. We define the
cost of this shortcut to be $(j-i)/2$.

This completes the hardness construction. It is easy to see that
the instance size is polynomial (for completeness, we add the
proof in Appendix).






\section{Intuition and Overview of the Analysis}
\label{sec:intuition}

Before we move on to the analysis, we explain the intuition behind
the hardness construction in the previous section and the analysis
in the next section.

\paragraph{\NP-hardness}
First, let us understand what happens when we
apply the construction in Section~\ref{sec:construction} to Raz
verifier's $\Phi$ without applying Algorithm~$\mathcal A$ (cf.
Section~\ref{sec:farsequence}) to get a $(\delta, \gamma)$-far
sequence; in other words, the sequence of constraints is arbitrary.

We use the following example to convey the idea. Consider a \twosat
instance with three variables $x_1, x_2, x_3$ and two clauses
$C_1=(x_1\vee x_2)$ and $C_2=(x_1\vee x_3)$. (For the sake of
simplicity, we consider an instance of \twosat instead of {\sf Max
3SAT}.) The constraints of the Raz verifier with $\ell=1$ repetition
are $r_1=(C_1, x_1)$, $r_2=(C_1, x_2)$, $r_3=(C_2, x_3)$, and
$r_4=(C_2, x_1)$.
If we construct the graph $G$ from the sequence of constraints $r_1,
r_2, r_3, r_4$ according to the construction in
Section~\ref{sec:construction} then we will get the graph $G$ as in
Figure~\ref{fig:intuition-example}.

\begin{figure}
\includegraphics[width=\linewidth]{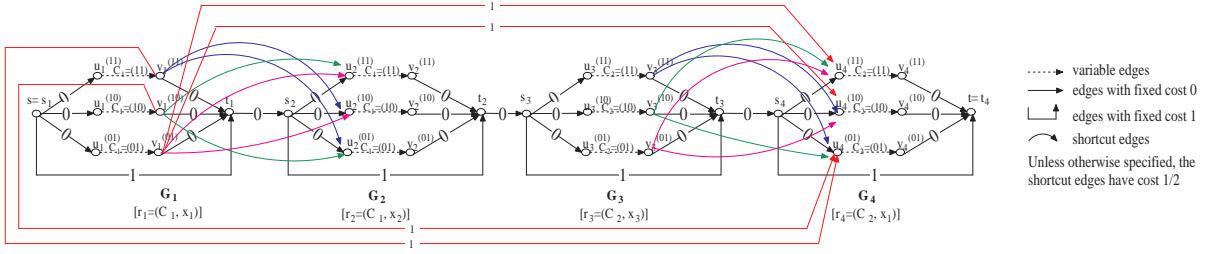}
\caption{\footnotesize Example of graph $G$ constructed from \twosat
$(x_1\vee x_2)\wedge (x_1\vee x_3)$ with $\ell=1$ repetition. Each
gadget $G_i$ is noted with the corresponding constraints $r_i$ and
each variable edge $u_i^av_i^a$ is noted with the corresponding
answer from Prover~1. Note that the corresponding answer from
Prover~2 can be identified easily. (For example, an edge
$u_1^{(10)}v_1^{(10)}$ corresponds to assigning $x_1=1$ and $x_2=0$.
Therefore, Prover~2's corresponding answer for
$u_1^{(11)}v_1^{(11)}$ is $x_1=1$.) The bigger picture is in
Appendix.}\label{fig:intuition-example}
\end{figure}


Consider any pricing $p$ and let $P$ be the corresponding shortest
path from $s$ to $t$. We classify the shortcuts whose both endpoints
are in $P$ into two types, edges that are {\em contained} in $P$ and
edges that are {\em induced} by $P$, as follows.


\begin{definition} We say that $P$ {\bf contains} an
edge $e$ if $e$ is an edge on $P$, and we say that $P$ {\bf induces}
$e$ if $e$ is not an edge on $P$ but both end vertices of $e$ are on
$P$. We say that $P$ {\bf involves} $e$ if $P$ contains or induces
$e$.
\end{definition}



Observe that if $P$ involves no shortcuts then we can construct a
satisfying assignment from $P$. For example, a path $s_1
u_1^{(11)}v_1^{(11)}t_1 s_2u_2^{(11)}v_2^{(11)}t_2
s_3u_3^{(10)}v_3^{(10)}t_3 s_4u_4^{(10)}v_4^{(10)}t_4$ involves no
shortcuts and could be converted to an assignment $x_1=1$, $x_2=1$
and $x_3=0$. Conversely, a satisfying assignment of $\Phi$ can also
be converted to a solution (a price function) with respect to which
the corresponding shortest path involves no shortcut edges.
Moreover, observe that if $P$ involves no shortcuts then we can get
a revenue of $M$ by setting price of all variable edges to $1$ and
we always get a revenue less than $M$ otherwise. The following
observation follows: {\em $\Phi$ has a satisfying assignment if and
only if there is a solution that gives a revenue of $M$ in the
corresponding graph $G$.}
%
%
This observation, along with the reduction from {\sf Max 3SAT},
already lead to the \NP-hardness of {\sc StackSP}. This is in fact
the essential idea used in the previous hardness results
\cite{RochSM05, Joret08}.

\paragraph{Beyond \NP-hardness}
To extend the above idea to a
constant-hardness, we further observe an effect of the shortcuts on
the revenue. In particular, we observe that if there are many
``parts'' of the shortest path that either contain or induce too
many shortcuts then the revenue can be essentially at most $M/2$. To
be more precise, let us first make the following two observations.

First, observe that if $P$ contains shortcuts $e_1, e_2, ..., e_k$,
for some $k$, with costs $c_1, c_2, ..., c_k$ then we can collect
a revenue of at most $M-\sum_{i=1}^k c_i$ from $P$. This is
because there is a path of length $M$ from $s$ to $t$ and, for each
$i$, once edge $e_i$ with fixed cost $c_i$ is used, the revenue on
$P$ decreases by $c_i$. For example, the path $P_1=s_1
u_1^{(11)}v_1^{(11)} u_2^{(10)}v_2^{(10)}t_2
s_3u_3^{(10)}v_3^{(10)}u_4^{(11)}v_4^{(11)}t_4$ contains two
shortcuts $v_1^{11}u_2^{10}$ and $v_3^{10}u_4^{11}$ of cost of $1/2$
each. Therefore, any solution in which such path is the
corresponding shortest path gives a revenue of at most
$4-1/2-1/2=3$.

Secondly, consider when $P$ induces a shortcut edge $e'$ from gadget
$G_i$ to gadget $G_j$ with cost $c'$ {\it and}, for some reason, the
edges in the gadgets $G_i$ and $G_j$ can have price at most $1$
each. Then we can collect a revenue of roughly $M - (j-i)+c'+2$.
This is because we cannot collect more than $c'+2$ on the subpath of
$P$ from gadget $G_i$ to gadget $G_j$. For example, consider a
path $P_2=s_1 u_1^{(11)}v_1^{(11)}t_1 s_2u_2^{(11)}v_2^{(11)}t_2
s_3u_3^{(01)}v_3^{(01)} t_3 s_4u_4^{(01)}v_4^{(01)}t_4$ which induces
a shortcut $v_1^{(11)}u_4^{(01)}$ of cost $1$. For a pricing that
$P_2$ is the shortest path, we can collect a revenue of at most $3$ for
the following reason. First, we can collect at most $1$ from edge
$u_1^{(11)}v_1^{(11)}$ because edge $s_1t_1$ would be used
otherwise. Similarly, we can collect at most $1$ from edge
$u_4^{(01)}v_4^{(01)}$. Moreover, we can collect at most $1$ from
$u_2^{(11)}v_2^{(11)}$ and $u_3^{(01)}v_3^{(01)}$ altogether because
the shortcut $v_1^{(11)}u_4^{(01)}$ would be used otherwise.

In summary, the observations above imply that a shortcut from
gadget $i$ to gadget $j$ (either contained or induced) causes the
revenue on the subpath from gadget $G_i$ to gadget $G_j$ to be
bounded by $(j-i)/2+2$.

\paragraph{The role of $(\delta, \gamma)$-far sequence}
Before we proceed to show the consequence of these observations, we
would like to eliminate the effect of the the constant ``+2'' in the
bound of the revenue above since it will be an obstacle in the
analysis. In particular, to get the factor of $2$ hardness, we would
like to say that we can get a revenue of roughly $(j-i)/2$ and somehow
conclude that the graph reduced from \ni gives a revenue of at
most $M/2$. (Recall that we can get a revenue of $M$ in \yi.)
However, the constant +2 is a problem when $j-i$ is small.

We eliminate the above effect in a straightforward way: instead of
including the shortcuts for every constraint, we consider only the
shortcuts with large cost $(j-i)/2$. The problem is, when we throw
away some constraints, the constraint satisfaction problem becomes
easier, and we should be able to satisfy more fraction of the
constraints. We do not want this to happen. We want to somehow make
sure that by neglecting a particular set of ``bad'' constraints, the
soundness parameter does not grow by much. Roughly speaking,
Section~\ref{sec:farsequence} shows that we can get the desired
properties while the soundness parameter remains comparatively
small. In particular, we lose an additive factor of $\gamma$ in the
soundness parameter. (Please refer to Section~\ref{sec:farsequence}
for more details.)


\paragraph{Getting 2-approximation hardness}
Now that we can eliminate the effect of the constant +2, let us see
how we can use the above two observations to conclude the
2-approximation hardness.
Intuitively, the two observations above imply that if the shortest
path $P$ involves many shortcuts then the revenue we can collect on
$P$ is essentially at most $M/2$. To prove this intuitive assertion,
we argue in the next section that we can always decompose $P$ into
three types of paths -- paths that look like $P_1$, paths that look
like $P_2$ and paths that can be converted to the solution for
$\Phi$ such that the number of satisfied constraints is equal to the
number of variable cost edges in such paths altogether. This
decomposition needs to be carefully designed to maintain the
properties of the three types of paths and will be elaborated in
Section~\ref{sec:path decomposition}.

Using the above decomposition and the fact that paths of the first
two types give a revenue of at most half of their lengths, we
conclude that the revenue is at most $M/2+c$ where $c$ is the
number of edges in the paths of the third type. Using the fact
that $\Phi$ is $(\delta, \gamma)$-far, we conclude that $c$ is at
most $(\gamma+\eps/3)M$ where $\eps$ is the constant as in
Theorem~\ref{thm:maintheorem}. By considering large enough $n$
(and thus, large enough $\abs{\Phi}$) and choosing an appropriate
value of $\delta$ and $\gamma$ so that $c\leq \eps M$, we have
that the revenue is at most $(1/2+\eps)M$. This implies the gap of
$2-\eps$, and Theorem~\ref{thm:maintheorem} thus follows. We
formalize these ideas in the next section.

\section{Analysis}\label{sec:analysis}

Now we prove Theorem~\ref{thm:maintheorem} using the reduction in
Section~\ref{sec:reduction}. Recall that $\eps$ is a constant as
in Theorem~\ref{thm:maintheorem} and we let
$\ell=\ceil{\log(3/\eps)/\alpha}$ (where $\alpha$ is as in
Theorem~\ref{theorem: Raz}), $\delta = (\eps/10) 5^{-\ell}$ and
$\gamma = \eps/3$. It follows that the soundness parameter of the
Raz verifier is $2^{-\alpha\ell}\leq \eps/3$. (I.e., if $\phi$ is
a \ni, then at most $\eps/3$ fraction of constraints in $\Phi$ can
be satisfied.)

In this section, we show that when the size of $\phi$ (denoted by
$n$) is large enough, the reduction gives a $(2-\eps)$-gap between
the case when $\phi$ is satisfiable and when it is not. In
particular, in section~\ref{subsection: yes instance}, we show that
if $\phi$ is satisfiable, then there is a price function that
collects a revenue of $M$. Moreover, in Section~\ref{subsection: no
instance} we show that if $\phi$ is not satisfiable and $n$ is large
enough, there is no pricing strategy which collects a revenue of
more than $(1/2+\eps)M$. The value of $n$ will be specified in
Section~\ref{subsection: no instance}.

\subsection{\yi}
\label{subsection: yes instance}

Let $f: \qset_1 \rightarrow \aset_1, \qset_2 \rightarrow \aset_2$
be an assignment that satisfies every constraint in $\Phi$. For
gadget $G_i$ corresponding to the variable $r_i$, set price $1$ to
the edge from $u^{a}_{i}$ to $v^{a}_{i}$ for $a= f(q_1(r_i))$.
Other variable cost edges in $G_i$ are assigned the price of
$\infty$. We now show that we can collect a revenue of $M$ in this
case.

Let $P$ be the shortest path on this graph with respect to the
above pricing. Notice that path $P$ does not contain any shortcut
since a shortcut only goes between two edges that represent
inconsistent assignments. (I.e., if there is a shortcut from
$v^{a_i}_{i}$ to $u^{a_j}_{j}$ on $P$ then either $a_i$ is not
consistent with $a_j$ or $\pi_i(a_i)$ is not consistent with
$\pi_j(a_j)$. Specifically, either $q_1(r_i)=q_1(r_j)$ and
$a_i\neq a_j$, or $q_2(r_i)=q_2(r_j)$ and $\pi_i(a_i)\neq
\pi_j(a_j)$. However, this is impossible since if
$q_1(r_i)=q_1(r_j)$ then $a_i=a_j=f(q_1(r_i))$ and, similarly, if
$q_2(r_i)=q_2(r_j)$ then $\pi_i(a_i)=\pi_j(a_j)=f(q_2(r_i))$.)
%

Since the shortcut is not used, the length of $P$ is exactly $M$.
Moreover, observe that the path that uses all variable edges of
price 1 also has length $M$. This path is a shortest path and
gives a total revenue of $M$.
%

\subsection{\ni}
\label{subsection: no instance}

We assume for contradiction that there is a pricing function which
collects a revenue of $(1/2+\eps) M$. Let $p$ be such pricing
function and let $P$ be the corresponding shortest path. Our goal
is to construct an assignment that satisfies more than $\eps M/3$
constraints in $\Phi$. This will contradict the soundness
parameter $\eps/3$ of the Raz verifier.



\begin{definition}
A subpath $Q \sse P$ is said to be a {\bf source-sink subpath} of
$P$ if it starts at some source $s_i$ and ends at some sink $t_j$
for $i \leq j$. For any source-sink subpath $Q$, denote by $s(Q)$
and $t(Q)$ the gadget index to which the source and sink of $Q$
belong respectively.

Now, let $Q$ be any source-sink subpath and let $s_i$ and $t_j$ be
its source and sink, respectively.
Let $\sset = \set{Q_1,\ldots, Q_k}$ be a set of source-sink
subpaths of path $Q$. We say that $\sset$ is a {\bf source-sink
partition} of path $Q$ if $s(Q_1) = i$, $t(Q_k)= j$, and for all
$p < k$, we have $t(Q_p) +1 = s(Q_{p+1})$.
\end{definition}


The following theorem is the key idea to proving the result.

\begin{theorem}[Path Decomposition]
\label{theorem: path decomposition} Let $p: E_v\rightarrow R^+\cup
\set{0}$ be the optimal pricing of the variable edges and $P$ be the
corresponding shortest path in the graph. Then we can find sets
$\rset$ and $\rset'$ such that the following properties hold.

\renewcommand{\theenumi}{D\arabic{enumi}}

\be

\item \label{property: partition} $\rset \cup \rset'$ is a
source-sink partition of $P$.

\item \label{property: small revenue in R'} The total revenue
collected from edges on paths in $\rset'$ is at most $M/2 +
O(1/\delta)$. In other words, $\sum_{e\in E_v\cap (\bigcup_{P\in
\rset'} P)} p(e)\leq M/2+O(1/\delta)$.

\item \label{property: small edge revenue} The price of any variable cost edge
in $\rset$ is at most $1$. That is, $p(e)\leq 1$ for any $e\in
E_v\cap (\bigcup_{P\in \rset} P)$.

\item \label{property: no shortcut} There is no shortcut between
any two variable cost edges in $\rset$.

\ee

\end{theorem}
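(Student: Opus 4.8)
The goal is to split the shortest path $P$ into a source-sink partition, separating the "bad" segments (where revenue is inherently capped at half their length because of shortcuts being contained or induced) from the "good" segments (which avoid shortcuts entirely and have all variable-edge prices at most $1$, so they will later be convertible to a partial assignment for $\Phi$).

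**Overall approach.** I would build $\rset'$ greedily by scanning $P$ from $s=s_1$ towards $t=t_M$ and repeatedly carving off a maximal "bad chunk" whenever I encounter a shortcut that $P$ involves (contains or induces), then putting everything left over into $\rset$. Concretely: walk along $P$; whenever the current gadget $G_i$ is the tail of a shortcut $e$ that $P$ involves, with head in gadget $G_j$, form the source-sink subpath $Q$ running from $s_i$ to $t_j$ (snapping to gadget boundaries so that $Q$ starts at some source and ends at some sink), add $Q$ to $\rset'$, and resume the scan at $s_{j+1}$. When no such shortcut is encountered from the current position up to the next shortcut-tail, the maximal stretch of gadgets in between becomes a subpath in $\rset$. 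The subpaths between consecutive $\rset'$-chunks, together with any leading/trailing stretch, are placed in $\rset$. By construction these pieces tile $P$ and meet at gadget boundaries, so $\rset \cup \rset'$ is a source-sink partition of $P$, giving property D1.

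**Establishing the four properties.** Property D4 is essentially immediate from the construction: a subpath put into $\rset$ is a maximal stretch that $P$ involves no shortcut within, so in particular no shortcut runs between two of its variable edges — any such shortcut would have been detected and its span carved into $\rset'$ instead. Property D3 I would argue by the pricing-optimality/shortest-path logic already sketched in Section~\ref{sec:intuition}: if some variable edge $u_i^a v_i^a$ in an $\rset$-piece had price $>1$, then the length-$1$ fixed edge $s_i t_i$ together with the surrounding structure would give a cheaper $s$–$t$ route (since the $\rset$-piece containing it has no shortcut restricting the detour), contradicting that $P$ is shortest under the optimal pricing $p$; so every variable edge in $\rset$ has price $\le 1$. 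For property D2, I would bound the revenue on each chunk $Q\in\rset'$ spanning gadgets $G_i$ to $G_j$ by $(j-i)/2 + O(1)$: this is exactly the "$P_1$/$P_2$" phenomenon from the intuition section — a contained shortcut of cost $(j-i)/2$ directly subtracts that much from the at-most-$(j-i)$ potential revenue, and an induced shortcut of cost $(j-i)/2$ between gadgets whose edges are price-capped at $1$ limits the subpath revenue to roughly $(j-i)/2 + 2$. Summing over the disjoint chunks in $\rset'$, whose lengths sum to at most $M$, the leading terms total at most $M/2$, and since each chunk in $\rset'$ has length at least $\delta M$ (shortcuts only exist between $\delta$-far gadgets, so $j - i \ge \delta M$, ensuring the spans are large), there are at most $1/\delta$ chunks, so the $O(1)$ additive errors sum to $O(1/\delta)$, yielding the claimed bound $M/2 + O(1/\delta)$.

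**Main obstacle.** The delicate point — and where I expect the real work to lie — is making the greedy carving genuinely well-defined and simultaneously compatible with \emph{all} of D2, D3, D4. In particular, a shortcut that $P$ \emph{induces} (rather than contains) has its two endpoints on $P$ but the in-between portion of $P$ may wander through many gadgets and might itself contain or induce further shortcuts; deciding exactly which gadgets get absorbed into the $\rset'$-chunk so that (a) the price-cap argument for D3 still applies to the leftover $\rset$-pieces, (b) no shortcut survives inside an $\rset$-piece for D4, and (c) the revenue-accounting for D2 does not double-count, requires a careful choice of boundaries — likely "take the minimal source-to-sink window that contains all problematic shortcut endpoints in that region, then merge overlapping windows." Getting the revenue bound on such merged windows to still read $(\text{length})/2 + O(1)$ rather than $(\text{length})/2 + O(\text{number of merged shortcuts})$ is the crux, and is presumably where the $\delta$-farness (forcing each window to be long) does the heavy lifting.
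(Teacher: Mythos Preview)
Your overall plan is sound and the obstacle you flag at the end is exactly the crux; what is missing is the device that resolves it. The paper does \emph{not} use a single left-to-right carving. It runs two phases. Phase~1 handles only shortcuts \emph{contained} in $P$: whenever a path in $\rset$ contains a shortcut, the minimal source--sink window around it is moved to a set $\sset$. After Phase~1 every remaining $\rset$-piece contains no shortcut, so every $s_i,t_i$ lies on it, and D3 follows for all variable edges still in $\rset$ (Lemma~\ref{lemma: limited revenue per gadget}); the revenue bound on $\sset$-chunks is a clean telescoping of shortcut costs (Claim~\ref{claim: limited revenue in T}). Phase~2 then scans for \emph{induced} shortcuts from some $v_i^{a}$ in an $\rset$-piece to some $u_j^{b}$ in an $\rset$-piece, and moves the window from $s_i$ to $t_j$ into a set $\tset$. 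The revenue bound $\len(Q)/2+2$ on a Phase-2 chunk (Claim~\ref{claim: limited revenue in S}) uses that the two boundary variable edges $u_i^{a}v_i^{a}$ and $u_j^{b}v_j^{b}$ have price at most $1$ --- and this is available precisely because they sat in $\rset$ \emph{after} Phase~1, so D3 already applies to them. In your single-pass version this step is circular: when you carve $[i,j]$ for an induced shortcut you invoke the price cap at gadgets $i$ and $j$, but those gadgets are headed into $\rset'$, not $\rset$, so D3 does not cover them; and if $P$ happens to leave gadget $j$ via a contained shortcut then $t_j$ is not on $P$ at all, so ``snap to $t_j$'' and ``resume at $s_{j+1}$'' are undefined. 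Separating contained from induced shortcuts into two passes is exactly what dissolves your ``Main obstacle''.

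One smaller point: your justification of D4 argues only that no shortcut lies within a single $\rset$-piece, but D4 forbids shortcuts between \emph{any} two variable edges across all of $\rset$ (this is how it is later used to rule out conflicting assignments to a shared query). In the paper this is what Phase~2 accomplishes; your scan would in fact also enforce it, but that needs an argument beyond the within-piece claim you gave.
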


We defer the proof of this theorem to the next section. Meanwhile
we show how the theorem implies that we can construct an
assignment that satisfies more than $\eps/3$ fraction of the
constraints in $\Phi$, thus a contradiction to the soundness
parameter. First, we consider only when $n$ is sufficiently large
so that we can collect at most $M/2+ O(1/\delta) < M/2+ \eps M/3$
from edges in $\rset'$ (from Property~\ref{property: small revenue
in R'}). Consequently, at least $2 \eps M/3$ must be collected
from edges in $\rset$.

Let $E'$ be the set of all variable cost edges that lie on some
paths in $\rset$. From Property~\ref{property: small edge revenue},
we have $\abs{E'} \geq 2\eps M/3$. Let $F \sse E'$ be the set of
edges in $E'$ that lie in far gadgets. Recall that we have at most
$\eps M/3$ gadgets that are not far (after we run an
algorithm~$\mathcal A$ in Theorem~\ref{theorem: far sequence}), so
$\abs{F} \geq \eps M/3$.

We are now ready to describe how we get an assignment that satisfies
a large fraction of constraints in $\Phi$. For each edge $e \in F$,
edge $e$ can be written as $u^{a_i}_{i}v^{a_i}_i$ for some gadget
$i$. We assign the answer $a$ for query $q_1(r_i)$ and $\pi_i(a)$
for query $q_2(r_i)$. This assignment satisfies the constraint
$r_i$. This process satisfies at least $\eps M/3$ constraints
corresponding to the edges in $F$ provided that there is no conflict
in assignment.

We argue that there is no such conflict since there is no shortcut
between the edges in $F$. I.e., assume that the above process
creates a conflict assignment to the same query $q$. This means that
there are two constraints $r_i, r_j \in \Phi$ for $i <j$ with
$q=q_1(r_i) =q_1(r_j)$ or  $q=q_2(r_i) =q_2(r_j)$ and such query $q$
was assigned different answers $a_i$ and $a_j$ when processing
gadgets $i$ and $j$. Since both $r_i$ and $r_j$ are far gadgets, by
construction, there must be a shortcut between two vertices
$v_i^{a_i}$ and $u_j^{a_j}$. This contradicts the fact that there is
no shortcut in $\rset$.

\subsection{Proof of Theorem~\ref{theorem: path
decomposition}}\label{sec:path decomposition} Consider any
source-sink subpath $Q$. Since there is a fixed-cost path of
length $t(Q) - s(Q) +1$ from $s_{s(Q)}$ to $t_{t(Q)}$, the revenue
collected on $Q$ is at most $t(Q)-s(Q)+1$, which will be denoted
by $\len(Q)$. We let $\rev(Q)$ be the revenue collected on subpath
$Q$, i.e. $\rev(Q) = \sum_{e\in Q\cap E_v} p(e)$. First, observe
the following lemma whose proof is simple and is deferred to
Appendix.

\begin{lemma}
\label{lemma: length of paths} If $\sset=\set{Q_1,\ldots, Q_k}$ is
a source-sink partition of $Q$, then $\sum_{j=1}^k \len(Q_j) =
\len(Q).$
\end{lemma}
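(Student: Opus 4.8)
The statement is a bookkeeping identity about the index labels of the subpaths, so the plan is to prove it by a direct telescoping computation from the two defining facts: the definition $\len(Q) = t(Q) - s(Q) + 1$ (which depends only on the gadget indices of the endpoints of $Q$, not on the internal structure of $Q$), and the chaining condition that $t(Q_p) + 1 = s(Q_{p+1})$ for every $p < k$, together with the boundary conditions $s(Q_1) = s(Q)$ and $t(Q_k) = t(Q)$ that come with the notion of a source-sink partition.

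First I would expand
\[
\sum_{j=1}^k \len(Q_j) = \sum_{j=1}^k \bigl(t(Q_j) - s(Q_j) + 1\bigr) = \sum_{j=1}^k t(Q_j) - \sum_{j=1}^k s(Q_j) + k.
\]
Then I would rewrite $t(Q_p) = s(Q_{p+1}) - 1$ for each $p \le k-1$, so that $\sum_{j=1}^k t(Q_j) = t(Q_k) + \sum_{j=1}^{k-1}\bigl(s(Q_{j+1}) - 1\bigr) = t(Q_k) + \sum_{j=2}^{k} s(Q_j) - (k-1)$. Substituting this into the previous display, the sums $\sum_{j=2}^k s(Q_j)$ cancel against all but the $j=1$ term of $\sum_{j=1}^k s(Q_j)$, and the additive constants collapse, leaving $t(Q_k) - s(Q_1) + 1$. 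Finally, invoking $t(Q_k) = t(Q)$ and $s(Q_1) = s(Q)$ gives exactly $t(Q) - s(Q) + 1 = \len(Q)$, as desired.

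There is essentially no obstacle here; the only thing to be careful about is keeping the index shifts straight when re-indexing the telescoped sum. If a cleaner exposition is preferred, the same argument can be phrased as an induction on $k$: the base case $k=1$ is immediate since then $s(Q_1) = s(Q)$ and $t(Q_1) = t(Q)$; for the inductive step one merges $Q_{k-1}$ and $Q_k$ (they meet exactly at the fixed-cost edge from $t_{t(Q_{k-1})}$ to $s_{t(Q_{k-1})+1} = s_{s(Q_k)}$) into a single source-sink subpath $Q'$ with $s(Q') = s(Q_{k-1})$ and $t(Q') = t(Q_k)$, whence $\len(Q_{k-1}) + \len(Q_k) = \len(Q')$ by one application of the chaining condition, and the claim follows from the inductive hypothesis applied to the shorter partition $\{Q_1, \ldots, Q_{k-2}, Q'\}$ of $Q$.
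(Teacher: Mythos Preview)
Your proof is correct and follows essentially the same telescoping argument as the paper's own proof, which compresses it to a single chain of equalities $\sum_{j=1}^k \len(Q_j) = \sum_{j=1}^k (t(Q_j)-s(Q_j)+1) = t(Q_k)-s(Q_1)+1 = t(Q)-s(Q)+1 = \len(Q)$, justified by the chaining condition $t(Q_j)+1 = s(Q_{j+1})$ and the boundary conditions. Your version simply spells out the cancellation more explicitly (and adds an optional induction), but there is no substantive difference.
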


We now explain the decomposition of the shortest path $P$ (from
Theorem~\ref{theorem: path decomposition}) into several source-sink
subpaths. Each subpath is contained in one of the sets $\rset$,
$\sset$ and $\tset$. In the end, we let $\rset'$ in the
Theorem~\ref{theorem: path decomposition} equal to $\tset \cup
\sset$. The composition consists of two phases. We next describe
each phase and prove the properties in Theorem~\ref{theorem: path
decomposition} along the way.

In the first phase, our goal is to make sure that $\rset$ contains
only source-sink subpaths that do not contain any shortcut.
Initially, we set $\rset$, $\sset$, and $\tset$ to $\rset=\{P\}$,
and $\sset=\tset=\emptyset$. We then remove the portion of paths
$P$ which contains the shortcut edges and add them to set $\sset$.
We ensure that paths are always cut into source-sink subpaths. In
particular, we do the following.

\paragraph{Phase 1:} Initially, $\rset= \set{P}$ and $\tset = \sset =
\emptyset$.
While there exists a path $P'\in \rset$ that contains a shortcut
edge, do the following. Let $vv'$ be any shortcut edge. Let $s_i$
be the last source vertex that appears before $v$ in $P'$ and let
$t_j$ be the first sink vertex that appears after $v'$ in $P'$. We
note that $i, j$ denote the gadget indices to which the vertices
belong. First remove $P'$ from $\rset$. Denote by $Q$ the
source-sink subpath of $P'$ from $s_i$ to $t_j$. We break $P'$
into three (possibly empty) source-sink subpaths $Q_l$, $Q$, and
$Q_r$; (i) $Q_l$ starts at $s(Q)$ and ends at vertex $t_{i-1}$,
(ii) $Q$ starts and ends at $s_i$ and $t_j$, respectively, and
(iii) $Q_r$ starts at $t_{j+1}$ and ends at $t(Q)$.
We then add $Q$ to $\sset$ and add $Q_l,Q_r$ back to $\rset$.
\\

Consider the set $\rset' = \sset \cup \tset$. We show that, after
this phase, the output satisfies properties~\ref{property:
partition}, \ref{property: small revenue in R'}, and
\ref{property: small edge revenue}. After the second phase,
property~\ref{property: no shortcut} will be satisfied while other
properties remain to hold. Observe that property~\ref{property:
partition} holds simply because the way we break path $P'$
guarantees that $s(Q_l)= s(P')$, $t(Q_l)+1 = s(Q)$, $t(Q)+1 =
s(Q_r)$, and $t(Q_r) = t(P')$. The next two lemmas prove
properties~\ref{property: small edge revenue} and \ref{property:
small revenue in R'}.

\begin{lemma}[Property \ref{property: small edge revenue}]
\label{lemma: limited revenue per gadget} After Phase 1, $p(e)\leq
1$ for any variable edge $e \in E_v$ that belongs to some path $Q$
in $\rset$.
\end{lemma}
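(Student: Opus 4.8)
The plan is to show that any variable edge $e$ remaining in a path $Q \in \rset$ after Phase 1 must have price at most $1$, by exploiting the structure that Phase 1 guarantees: every path in $\rset$ is a source-sink subpath containing no shortcut edge. First I would fix such a path $Q \in \rset$ and a variable edge $e = u_i^a v_i^a$ lying on $Q$. Since $Q$ is a source-sink subpath with no shortcut, the portion of $Q$ passing through gadget $G_i$ must enter via $s_i$ (or arrive from $t_{i-1}$ and then $s_i$) and leave via $t_i$ (or continue to $s_{i+1}$); in particular $Q$ contains the subpath $s_i \to u_i^a \to v_i^a \to t_i$ using the cost-$0$ edges $s_i u_i^a$ and $v_i^a t_i$ together with the variable edge $e$.

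The key step is a local exchange argument: suppose for contradiction that $p(e) > 1$. Consider the path $P$ itself (the global shortest path), which uses $e$. Replace the subpath $s_i \to u_i^a \to v_i^a \to t_i$ of $P$ by the single fixed-cost edge $s_i t_i$ of cost $1$. This yields another $s$-$t$ path $P''$ whose length is $\len(P) - p(e) - 0 - 0 + 1 = \len(P) - p(e) + 1 < \len(P)$, contradicting that $P$ is a shortest path. The only subtlety is making sure this replacement is legal, i.e. that $P$ really does traverse $s_i \to u_i^a \to v_i^a \to t_i$ contiguously; this is exactly where the ``no shortcut in $\rset$'' invariant of Phase 1 is used — if $P$ used a shortcut into $u_i^a$ or out of $v_i^a$, that shortcut edge would have survived into some path of $\rset$ containing $e$, but Phase 1 only terminates when no path in $\rset$ contains a shortcut.

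I would then note that one must be slightly careful about the tie-breaking convention: since the problem assumes that among shortest paths the client picks the revenue-maximizing one, the inequality ``$P$ is strictly shortest'' is not literally available, but the exchange above produces a path that is strictly shorter whenever $p(e) > 1$, so no tie-breaking issue arises — the contradiction is with minimality of length, not with the tie-break. Hence $p(e) \le 1$, which is Property~\ref{property: small edge revenue} restricted to $\rset$ after Phase~1.

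The main obstacle I anticipate is the bookkeeping around which vertices of gadget $G_i$ the path $P$ actually visits: in principle $P$ could enter $G_i$, use edge $e = u_i^a v_i^a$, but enter $u_i^a$ via a shortcut from an earlier gadget rather than via $s_i u_i^a$, or leave $v_i^a$ via a shortcut rather than via $v_i^a t_i$. Ruling this out requires invoking that $e$ lies on a path $Q \in \rset$ and that $\rset$ is shortcut-free after Phase~1, so any shortcut incident to $u_i^a$ or $v_i^a$ that $P$ uses would have been cut out of $\rset$ (separating it from $e$) during Phase~1 — hence $e$ would not be on $Q$. Once this is nailed down, the arithmetic of the exchange is routine.
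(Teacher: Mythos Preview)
Your proposal is correct and takes essentially the same approach as the paper's proof: both argue that, because $Q\in\rset$ contains no shortcut, the portion of $P$ through gadget $G_i$ is exactly $s_i\,u_i^a\,v_i^a\,t_i$, and then perform the local exchange with the cost-$1$ edge $s_i t_i$ to derive a contradiction when $p(e)>1$. Your extra remarks on tie-breaking and on why $P$ cannot enter $u_i^a$ or leave $v_i^a$ via a shortcut are just elaborations of the same point the paper handles in one sentence.
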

\begin{proof}
Since path $Q$ does not contain shortcuts, vertices $s_i$ and
$t_i$ lie on $Q$ for all $s(Q)\leq i \leq t(Q)$. Recall that edge
$e$ can be written in the form $u^a_j v^a_j$ for some $j$ and $a
\in A(q_1(r_i))$. If $p(e)
>1$, we can obtain a path shorter than $P$
by using the fixed cost edge $s_j t_j$ of cost $1$ instead of $s_j
u^a_j v^a_j t_j$. This contradicts the fact that $P$ is a
shortest path.
\end{proof}

\begin{lemma}[Property~\ref{property: small revenue in R'}] After the
  first phase, the revenue in $\rset'=\sset\cup \tset$ is at most
  $M/2+ O(1/\delta)$. In particular, $\sum_{Q \in \sset} \rev(Q) \leq  \frac{1}{2}\paren{\sum_{Q \in
    \sset} \len(Q)} + O(1/\delta)$.
\end{lemma}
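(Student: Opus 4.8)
The plan is to analyze each path $Q$ that gets moved into $\sset$ during Phase~1 and show that its revenue $\rev(Q)$ is at most roughly $\frac12\len(Q)$, so that summing over $\sset$ (whose lengths add up to at most $M$ by Lemma~\ref{lemma: length of paths}) gives the bound. The subtle point is that $Q$ was created because it contains a shortcut edge $vv'$ of cost $(j-i)/2$ — but $Q$ may contain \emph{several} shortcut edges, and it was constructed by grabbing everything between the last source before $v$ and the first sink after $v'$. So I would focus on the first observation recorded in Section~\ref{sec:intuition}: if a source-sink subpath $Q$ contains shortcut edges $e_1,\dots,e_k$ of costs $c_1,\dots,c_k$, then the revenue collected on $Q$ is at most $\len(Q) - \sum_{t=1}^k c_t$, because each time a shortcut of fixed cost $c_t$ is traversed it ``eats'' $c_t$ of the budget that the competing fixed-cost path of length $\len(Q)$ would otherwise cap. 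This must be argued from the shortest-path property: if the revenue on $Q$ exceeded $\len(Q) - \sum c_t$, one could reroute inside $Q$ (replacing the shortcut-using portion with the corresponding stretch of fixed-cost edges $s_a t_a$) to obtain a strictly shorter $s$--$t$ path, contradicting optimality of $P$.

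Next I would bound $\sum_t c_t$ from below. The key structural fact is that $vv'$ runs between a far gadget $G_i$ and a far gadget $G_j$; by the construction, such a shortcut exists only when $q_1(r_i)=q_1(r_j)$ or $q_2(r_i)=q_2(r_j)$, and since $r_i$ is $\delta$-far this forces $j - i > \ceil{\delta M}$, i.e. the shortcut cost $(j-i)/2$ is at least $\ceil{\delta M}/2$. So each $Q\in\sset$ has length at least $\delta M$ and carries at least $\delta M / 2$ worth of shortcut cost. More importantly, the total $\sum_{t} c_t$ inside a single $Q$ is at least $\frac12(\text{span of }Q\text{ covered by shortcuts})$; by choosing, for the ``defining'' shortcut of $Q$, one that spans as much of $Q$ as possible (last source before $v$, first sink after $v'$), one gets $\len(Q) \le (j-i) + 2 = 2c + 2$ where $c=(j-i)/2$ is that shortcut's cost, hence $c \ge \len(Q)/2 - 1$. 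Combining with the revenue inequality, $\rev(Q) \le \len(Q) - c \le \len(Q) - (\len(Q)/2 - 1) = \len(Q)/2 + 1$.

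Finally I would sum: $\sum_{Q\in\sset}\rev(Q) \le \frac12\sum_{Q\in\sset}\len(Q) + |\sset|$. Since each $Q\in\sset$ has $\len(Q) \ge \delta M$ and the lengths of disjoint source-sink subpaths add up to at most $\len(P) \le M$ (Lemma~\ref{lemma: length of paths}), we get $|\sset| \le M/(\delta M) = 1/\delta$, so $\sum_{Q\in\sset}\rev(Q)\le \frac12\sum_{Q\in\sset}\len(Q) + O(1/\delta)$. At the end of Phase~1, $\tset=\emptyset$, so $\rset'=\sset$ and $\sum_{Q\in\rset'}\len(Q)\le M$, giving revenue at most $M/2 + O(1/\delta)$, as claimed.

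The step I expect to be the main obstacle is pinning down the rerouting argument cleanly when $Q$ contains multiple, possibly nested or overlapping, shortcuts: one has to show that replacing \emph{all} shortcut-traversing segments by fixed-cost detours still yields a legitimate $s_{s(Q)}$--$t_{t(Q)}$ walk of length exactly $\len(Q)$, so that the revenue on $Q$ is bounded by $\len(Q)$ minus the sum of the costs of the shortcuts actually used. Handling the bookkeeping of which gadget indices are ``skipped'' by each shortcut, and ensuring the detours concatenate into a single path without double-counting, is where the care is needed; everything else is the counting described above.
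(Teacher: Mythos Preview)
Your overall strategy matches the paper's: establish $\rev(Q) \le \len(Q)/2 + O(1)$ for each $Q\in\sset$, then bound $|\sset|=O(1/\delta)$ using disjointness of the intervals $[s(Q),t(Q)]$ and the fact that each $Q$ contains a shortcut of span at least $\delta M$. The bound on $|\sset|$ is fine.

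The gap is in your per-path bound. You correctly note $\rev(Q)\le \len(Q)-\sum_t c_t$; in fact this is immediate (the total cost of $Q$ is at most $\len(Q)$ since $P$ is shortest, and the fixed cost on $Q$ is at least $\sum_t c_t$), so the ``rerouting obstacle'' you flag is not one. What fails is your lower bound on $\sum_t c_t$ via a \emph{single} defining shortcut. If the triggering shortcut $vv'$ runs from gadget $a$ to gadget $b$ with cost $c=(b-a)/2$, the extracted subpath $Q$ runs from $s_i$ to $t_j$ where $s_i$ is the last source before $v$ and $t_j$ the first sink after $v'$. Your inequality $\len(Q)\le 2c+2$ needs $i=a$ and $j=b$, but if $P'$ traverses several consecutive shortcuts (e.g.\ $s_1\,u_1 v_1\to u_5 v_5\to u_{10} v_{10}\,t_{10}$, where $vv'=v_5u_{10}$), then $i=1<5=a$ and $\len(Q)=10$ while $2c+2=7$.

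The paper's fix is precisely the ``span covered by shortcuts'' idea you mention but do not carry out. Because no source or sink can occur strictly between $s_i$ and $t_j$ on $Q$ (that would contradict ``last source before $v$'' or ``first sink after $v'$''), $Q$ has the rigid form
\[
s_i\; u_{i_1}^{a_1} v_{i_1}^{a_1}\; u_{i_2}^{a_2} v_{i_2}^{a_2}\;\cdots\; u_{i_q}^{a_q} v_{i_q}^{a_q}\; t_j,
\qquad i_1=i,\ i_q=j,
\]
where every hop $v_{i_x}^{a_x}\to u_{i_{x+1}}^{a_{x+1}}$ is a shortcut. The shortcut costs telescope: $\sum_t c_t=\sum_{x=1}^{q-1}(i_{x+1}-i_x)/2=(j-i)/2=(\len(Q)-1)/2$, yielding $\rev(Q)\le(\len(Q)+1)/2$ directly. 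Plugging this into your summation step completes the proof.
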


\begin{proof}
We will need the following claim.

\begin{claim}
\label{claim: limited revenue in T} For each path $Q \in \sset$,
we have $\rev(Q) \leq (\len(Q)+1)/2$.
\end{claim}
\begin{proof}
Consider path $Q \in \sset$ from $s_i$ to $t_j$. Recall that there
is a path of length $\len(Q)$ in $G$ from $s_i$ to $t_j$, so the
total cost of $Q$ is at most $\len(Q)$. It is, therefore, sufficient
to prove that the total cost of the shortcuts contained in $Q$ is at
least $(\len(Q)-1)/2$. The way we construct paths in $\sset$
guarantees that path $Q$ must be of the form
\[s_i \rightarrow u^{a_1}_{i_1}\Rightarrow v^{a_1}_{i_1} \rightarrow
u^{a_2}_{i_2}\Rightarrow v^{a_2}_{i_2} \rightarrow \ldots
\Rightarrow v^{a_q}_{i_q} \rightarrow t_j\]

where $i_1= i, i_q= j$, and edges of the form $u^{a_x}_{i_x}
\Rightarrow v^{a_{x}}_{i_{x}}$ are the variable cost edges, from
which we can collect a revenue. Other edges of the form $v^{a_x}_{i_x}
\rightarrow u^{a_{x+1}}_{i_{x+1}}$, for $1 \leq x <q$, are
shortcuts. Hence the total cost of shortcuts can be written as a
telescopic sum, $\sum_{x=1}^{q-1}
\paren{\frac{i_{x+1}-i_x}{2}} = (j-i)/2=(\len(Q)-1)/2$.
\end{proof}

By the claim, $\sum_{Q \in \sset} \rev(Q) \leq \sum_{Q \in \sset}
\paren{\len(Q)/2 + 1/2} \leq \frac{1}{2}\paren{\sum_{Q \in \sset}
\len(Q)}+ \abs{\sset}/2$. It then suffices to bound the size of
set $\sset$ by $O(1/\delta)$. Notice that each path in $\sset$
contains at least one shortcut. Recall that, by the construction
(cf. Section~\ref{sec:construction}), each shortcut only goes from
$v^a_i$ to $u^{a'}_j$ if $\abs{j-i} \geq \delta M$. Since the
intervals in the set $\set{[s(Q), t(Q)]: Q \in \sset}$ are
disjoint (by definition of source-sink partition), we can have at
most $O(1/\delta)$ paths in $\sset$.
\end{proof}


This completes the description and the proof of Phase~1. Now every
path in $\rset$ contains no shortcut. In phase~2, our goal is to
eliminate the shortcuts between paths in $\rset$. (Note that these
shortcuts are not contained in $P$.) Roughly speaking, we scan the
gadgets from left to right and once we find such shortcut, we move
the whole path that induces this shortcut to the set $\tset$. The
detail is as follows.

\paragraph{Phase 2:} Initially, we have $\rset$ and $\sset$ from Phase
1, and $\tset =\emptyset$. We proceed in iterations starting from
iteration $1$. The description of iteration $i$ is as follows:
%

\bi

\item We first check if source $s_i$ belongs to some path in
$\rset$. If not, we proceed to iteration $i+1$.

\item If $s_i$ does belong to any path $Q$ in $\rset$, we do the following.
We check if there is a shortcut (that is not contained in $Q$)
leaving from some vertex $v_i^{a_i}$ on $Q$ to some vertex
$u_j^{a_j}$ on some path $Q' \in \rset$. Note that $Q$ and $Q'$ may
be the same. Let $P' \sse P$ be the source-sink subpath from $s_i$
to $t_j$. We first remove from $\rset$ and $\sset$, all paths $Q''$
such that $Q'' \cap P' \neq \emptyset$. Let $Q_l$ be the source-sink
subpath of $Q$ with $s(Q_l) = s(Q)$ and $t(Q_l) = s(P')-1$. Also, we
let $Q_r$ be the source-sink subpath of $Q'$ with $s(Q_r) = t(P')+1$
and $t(Q_r) = t(Q')$. We add $P'$ to $\tset$, and add $Q_l$ and
$Q_r$ back to $\rset$.




\ei

We now check the properties.
Property~\ref{property: partition} holds simply because, in each
iteration, we remove only subpaths of what we will add (i.e., we may
add paths $Q$, $P'$ and $Q'$ to $\rset$ and $\tset$ and remove only
subpaths of $Q\cup P'\cup Q'$). Since paths in $\rset$ only get
chopped off, Lemma~\ref{lemma: limited revenue per gadget} still
holds, and so does property~\ref{property: small edge revenue}.
Properties~\ref{property: no shortcut} and \ref{property: small
revenue in R'} follow from the following Lemmas whose proofs are in
Appendix.


\begin{lemma}[Property~\ref{property: no shortcut}]
\label{lemma: no shortcut} After Phase 2, there is no shortcut
between any two subpaths in $\rset$.
\end{lemma}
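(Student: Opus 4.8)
The plan is to establish Property~\ref{property: no shortcut} by analyzing what the left-to-right scan in Phase~2 guarantees upon termination. The claim is that after all iterations, no path in $\rset$ induces or contains a shortcut leaving one of its vertices $v_i^{a_i}$ and landing on a vertex $u_j^{a_j}$ of a (possibly the same) path in $\rset$. The key observation is that Phase~1 already removed all shortcuts \emph{contained} in paths of $\rset$, so the only remaining danger after Phase~1 is a shortcut \emph{induced} by a path of $\rset$, or a shortcut between two distinct paths of $\rset$; and these are precisely what Phase~2 is designed to destroy.

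First I would argue by contradiction: suppose after Phase~2 some path $Q\in\rset$ has a vertex $v_i^{a_i}$ from which a shortcut $e$ goes to a vertex $u_j^{a_j}$ lying on a path $Q'\in\rset$ (with $Q=Q'$ allowed). Since $v_i^{a_i}\in Q$, the source $s_i$ lies on $Q$ as well (a source-sink subpath containing $v_i^{a_i}$ must start at or before $s_i$; more precisely, because paths in $\rset$ are source-sink subpaths, whenever $v_i^{a_i}$ is on the path so is $s_i$, by the structure of the gadget and the fact that the only way $P$ reaches $v_i^{a_i}$ after Phase~1 removals is through $s_i u_i^{a_i} v_i^{a_i}$). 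Now consider iteration $i$ of Phase~2. At the start of iteration $i$, either $s_i$ was still on a path of $\rset$ — in which case the iteration would have detected the shortcut $e$ (or some other shortcut leaving $v_i^{\cdot}$) and removed the offending source-sink subpath $P'$ from $s_i$ to $t_j$, contradicting the assumption that $e$ survives with both endpoints in $\rset$ — or $s_i$ was no longer on any path of $\rset$. In the latter case, $s_i$ was removed in some earlier iteration $i' < i$; I would then trace which subpath removal deleted $s_i$ and observe that the portion of $P$ through $s_i$ (hence through $v_i^{a_i}$, since that stretch of $P$ is the contiguous block $s_i u_i^{a_i} v_i^{a_i} t_i$ together with whatever follows) was absorbed into $\tset$ at iteration $i'$, so $v_i^{a_i}$ is no longer on any path of $\rset$ after iteration $i'$ — again contradicting $Q\in\rset$ containing $v_i^{a_i}$.

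The subtle point — and I expect this to be the main obstacle — is bookkeeping the claim that ``$v_i^{a_i}$ on a path of $\rset$ implies $s_i$ on the same path of $\rset$,'' and that this property is invariant under the chopping operations of both phases. This requires checking that every removal in Phase~1 and Phase~2 cuts $P$ only at source vertices $s_{i-1}\to s_i$ boundaries (equivalently, that $Q_l$ ends at some $t_{i-1}$ and $Q_r$ begins at some $t_{j+1}$), so that within any surviving path a source $s_i$ and its matching block $u_i^{a_i} v_i^{a_i} t_i$ are never separated. This is exactly how the source-sink subpaths were defined, so the invariant is maintained, but it deserves an explicit sentence. One also needs to note that when iteration $i$ fires and removes $P'$ together with all paths $Q''$ meeting $P'$, it genuinely removes every vertex between $s_i$ and $t_j$ inclusive from $\rset\cup\sset$; since the interval $[i,j]$ covers gadget index $i$, the source $s_i$ leaves $\rset$ as claimed. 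Finally, I would remark that Phase~2 terminates after at most $M$ iterations (one per gadget index) and that at termination no iteration fires, which is the negation of the existence of the shortcut $e$ above; combined with Property~\ref{property: small edge revenue} (still valid since paths in $\rset$ are only chopped, never extended) and the already-proved bound on the revenue moved into $\sset$ in Phase~1, together with a matching bound on the revenue moved into $\tset$ in Phase~2 — each such $\tset$-path $P'$ runs from $s_i$ to $t_j$ with an induced shortcut of cost $(j-i)/2$, so $\rev(P') \le (j-i)/2 + 2 \le \len(P')/2 + 2$, and the intervals $[s(P'),t(P')]$ over $P'\in\tset$ are disjoint and each has length $\ge \delta M$, giving at most $O(1/\delta)$ such paths — we recover Property~\ref{property: small revenue in R'} for $\rset' = \sset\cup\tset$ and complete the proof of Theorem~\ref{theorem: path decomposition}.
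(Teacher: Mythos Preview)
Your argument is correct and is essentially the paper's own (much terser) proof: both observe that when iteration $i$ fires, gadget $i$'s vertices leave $\rset$ entirely, and since the vertex set covered by $\rset$ only shrinks thereafter, no shortcut with tail in gadget $i$ can remain with both endpoints in $\rset$. Your final paragraph on Properties~\ref{property: small edge revenue} and~\ref{property: small revenue in R'} and the $O(1/\delta)$ bound on $\abs{\tset}$ lies outside the scope of this lemma and is handled separately in the paper (Lemma~\ref{lemma:small revenue in T}).
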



\begin{lemma}[Property~\ref{property: small
revenue in R'}] $\sum_{Q \in \tset} \rev(Q) \leq
\frac{1}{2}\paren{\sum_{Q \in \tset} \len(Q)} + O(1/\delta).$
\label{lemma:small revenue in T}
\end{lemma}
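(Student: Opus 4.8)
The plan is to mirror the argument used for the set $\sset$ in Phase~1, but now accounting for the fact that paths in $\tset$ are created because of \emph{induced} (rather than contained) shortcuts. First I would observe that, by construction of Phase~2, every path $P'$ added to $\tset$ is a source-sink subpath from some $s_i$ to some $t_j$ where there is a shortcut from a vertex $v_i^{a_i}$ on the path to a vertex $u_j^{a_j}$ on the path; in particular $j - i \geq \delta M$ by the $\delta$-farness of the construction, so $\len(P') = j-i+1 \geq \delta M$. Since the intervals $\{[s(Q), t(Q)] : Q \in \tset\}$ are pairwise disjoint (they form part of a source-sink partition, by Property~\ref{property: partition}), there can be at most $O(1/\delta)$ such paths, i.e.\ $\abs{\tset} = O(1/\delta)$.

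Next I would establish the per-path bound analogous to Claim~\ref{claim: limited revenue in T}: for each $P' \in \tset$ from $s_i$ to $t_j$, we have $\rev(P') \leq \len(P')/2 + O(1)$. The key point is that $P'$ induces the shortcut $v_i^{a_i}u_j^{a_j}$ of cost $(j-i)/2$, and moreover (since $P'$ arises by chopping paths that previously lay in $\rset$, where Lemma~\ref{lemma: limited revenue per gadget} applies) every variable cost edge on $P'$ has price at most $1$. Following the intuition paragraph's analysis of the path $P_2$: the total cost of $P'$ is at most $\len(P')$ because there is a fixed-cost $s_i$-$t_j$ path of that length, and the edge $u_i^{a_i}v_i^{a_i}$ contributes at most $1$ (else $s_it_i$ is used), the edge $u_j^{a_j}v_j^{a_j}$ contributes at most $1$ (else $s_jt_j$ is used), and the variable edges strictly between $s_i$ and $t_j$ contribute at most $(j-i)/2$ in total, since otherwise the induced shortcut of cost $(j-i)/2$ would give a shorter $s_i$-$t_j$ route and contradict that $P$ is shortest. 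Hence $\rev(P') \leq (j-i)/2 + 2 = (\len(P')-1)/2 + 2 \leq \len(P')/2 + O(1)$.

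Summing the per-path bound over $P' \in \tset$ and using $\abs{\tset} = O(1/\delta)$ gives
\[
\sum_{Q \in \tset} \rev(Q) \;\leq\; \sum_{Q\in\tset}\left(\frac{\len(Q)}{2} + O(1)\right) \;\leq\; \frac{1}{2}\left(\sum_{Q \in \tset} \len(Q)\right) + O(1/\delta),
\]
which is exactly the claim. The main obstacle I anticipate is the per-path revenue bound: one must argue carefully that the ``interior'' variable edges of $P'$ really are capped by $(j-i)/2$ in aggregate, which relies on the shortcut being \emph{induced} by $P'$ (both endpoints on $P'$) so that rerouting through it is a valid comparison path of the right length; handling the possibility that $Q$ and $Q'$ coincide, and that the chopping in earlier iterations may have already removed overlapping pieces, needs a little bookkeeping but does not change the bound. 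Combining this lemma with the Phase~1 bound $\sum_{Q\in\sset}\rev(Q) \leq \frac12\sum_{Q\in\sset}\len(Q) + O(1/\delta)$ and Lemma~\ref{lemma: length of paths} yields Property~\ref{property: small revenue in R'} for $\rset' = \sset \cup \tset$.
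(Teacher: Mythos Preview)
Your proof is essentially the paper's own argument: a per-path bound $\rev(P') \le \len(P')/2 + 2$ obtained by capping the two endpoint variable edges $e_i,e_j$ by $1$ each (via Lemma~\ref{lemma: limited revenue per gadget}) and capping the revenue on the $v_i^{a_i}$--$u_j^{a_j}$ subpath by the cost $(j-i)/2$ of the induced shortcut, then summing and using $\abs{\tset}=O(1/\delta)$.

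One small correction: your parenthetical claim that ``every variable cost edge on $P'$ has price at most $1$'' is false as stated, because $P'$ may swallow pieces that were in $\sset$ (Phase~2 explicitly removes overlapping paths from both $\rset$ and $\sset$), and Lemma~\ref{lemma: limited revenue per gadget} does not apply to those. Fortunately you never use this claim---you only need $p(e_i),p(e_j)\le 1$, which holds because $e_i,e_j$ lay on paths in $\rset$ at the moment Phase~2 processed them, and the interior revenue is bounded by the shortcut comparison regardless of individual edge prices---so the argument stands once that sentence is dropped.
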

%
%
%



\nu{To do for the final version: Acknowledgement.}

  \let\oldthebibliography=\thebibliography
  \let\endoldthebibliography=\endthebibliography
  \renewenvironment{thebibliography}[1]{%
    \begin{oldthebibliography}{#1}%
      \setlength{\parskip}{0ex}%
      \setlength{\itemsep}{0ex}%
  }%
  {%
    \end{oldthebibliography}%
  }
{ \small
\bibliographystyle{plain}
\bibliography{stackelberg-hardness}
}

\newpage \appendix
\section*{APPENDIX}
\section{Derandomization of Algorithm~$\mathcal A'$ in Theorem~\ref{theorem: far sequence}}


Now we derandomize $\mathcal A'$ to get a deterministic algorithm
$\mathcal A$ by the method of conditional expectation. Let $Y$
denote the number of constraints that are not $\delta$-far with
respect to a random permutation $\pi$. For a fixed permutation
$\pi'$, let $\events(\pi',I)$ be the event that $\pi$ agrees with
$\pi'$ on set $I$ (i.e., $\pi'(i)=\pi(i)$ for all $i\in I$).
Notice that, we can efficiently compute $\expect{}{Y \mid
\events(\pi',I)}$ for any $\pi'$ and $I$ where the expectation is
over random permutation $\pi$. Therefore, for $i=1, 2, ...$, we
deterministically pick the value of $\pi'(i)$ that maximizes the
value of $\expect{}{Y \mid \events(\pi',\set{1,\ldots, i-1})}$.

\section{Construction Size}

We first calculate the size of each
gadget $G_i$. There are $O(7^{\ell})$ vertices and $O(7^{\ell})$
edges for each gadget. Next, we count the number of shortcuts. For
each pair of constraints $r_i$ and $r_j$, there are at most
$O(7^{2\ell})$ shortcuts between their intermediate vertices. Since
there are $(5n)^{\ell}$ gadgets, the graph size is at most
$O(n)^{O(\ell)}$. Since $\ell=\ceil{\frac{\log(3/\eps)}{\alpha}}$,
the construction size is $O(n)^{O(1/\eps)}$ which is polynomial in
$n$ if $\eps$ is a constant.

\section{Omitted Proofs from Section~\ref{sec:analysis}}

\subsection{Proof of Lemma~\ref{lemma: length of paths}}
$\sum_{j=1}^k \len(Q_j) = \sum_{j=1}^k (t(Q_j)-s(Q_j)+1) =
t(Q_k)-s(Q_1) +1= t(Q)+1-s(Q) = \len(Q)$ where the second equality
is because $t(Q_j)+1=s(Q_{j+1})$ for all $j\leq k$ and the third
equality is because $t(Q_k)=t(Q)$ and $s(Q_1)=s(Q)$.\bun{We may
have to rearrage this proof for easy reading}

%
%

\subsection{Proof of Lemma~\ref{lemma: no shortcut}}
Notice that once a shortcut leaving gadget $i$ is found, the whole
part of gadget $i$ is removed completely from $\rset$. Therefore,
after iteration $i$, there is no shortcut leaving the vertex in $P
\cap G_i$ to other vertices lying on some path in $\rset$. (In
fact, the vertex in $P\cap G_i$ is not in any path in $\rset$
anymore.)


\subsection{Proof of Lemma \ref{lemma:small revenue in T}}

Similarly to Claim~\ref{claim: limited revenue in T}, we can also
bound the revenue on paths in $\tset$ as summarized in the following
claim whose proof can be found in Appendix.

\begin{claim}
\label{claim: limited revenue in S} For each path $Q \in \tset$, we
have $\rev(Q) \leq \frac 1 2 \len(Q) + 2$
\end{claim}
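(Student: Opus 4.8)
\textbf{Proof plan for Claim~\ref{claim: limited revenue in S}.}
The plan is to mimic the proof of Claim~\ref{claim: limited revenue in T}, but accounting for the fact that a path $Q \in \tset$ \emph{induces} rather than \emph{contains} its defining shortcut. Fix $Q \in \tset$, say from $s_i$ to $t_j$, and recall that by the construction of Phase~2, $Q$ is a source-sink subpath $P' \sse P$ from $s_i$ to $t_j$ such that there is a shortcut $v_i^{a_i} u_j^{a_j}$ whose endpoints both lie on $P$ (hence on $Q$). Since there is a fixed-cost path of length $\len(Q) = j-i+1$ from $s_i$ to $t_j$, and moreover using the shortcut $v_i^{a_i} u_j^{a_j}$ of cost $(j-i)/2$ gives an even shorter $s_i$-$t_j$ path of cost at most $(j-i)/2 + 2$ (two zero-cost edges $s_i u_i^{a_i}$, one unit-or-less variable edge $u_i^{a_i} v_i^{a_i}$, the shortcut, then $u_j^{a_j} v_j^{a_j}$ and $v_j^{a_j} t_j$; more carefully one bounds the cost by $(j-i)/2 + c'$ for a small constant $c'$), the total cost of $Q$ is at most $(j-i)/2 + c'$ for a small constant. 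The revenue $\rev(Q)$ is at most the total cost of $Q$, which is at most $\frac{1}{2}(\len(Q)-1) + c' \le \frac12 \len(Q) + 2$ for an appropriate constant.

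The cleaner way I would actually write it is the one sketched in the ``Beyond \NP-hardness'' intuition for paths like $P_2$: on the subpath $Q$ from $s_i$ to $t_j$ inducing the shortcut $v_i^{a_i} u_j^{a_j}$, the variable edge $u_i^{a_i} v_i^{a_i}$ carries revenue at most $1$ (else route $s_i t_i$), the variable edge $u_j^{a_j} v_j^{a_j}$ carries revenue at most $1$ (else route $s_j t_j$), and all the remaining variable edges strictly between gadget $i$ and gadget $j$ carry a combined revenue of at most $(j-i)/2$, because otherwise the client would take $s_i \to u_i^{a_i} \to v_i^{a_i} \Rightarrow u_j^{a_j} \to v_j^{a_j} \to t_j$ (using the shortcut of cost $(j-i)/2$) which would be cheaper than $Q$. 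Summing, $\rev(Q) \le (j-i)/2 + 2 = \frac12(\len(Q)-1) + 2 \le \frac12 \len(Q) + 2$.

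Given the claim, Lemma~\ref{lemma:small revenue in T} follows exactly as the corresponding statement for $\sset$: sum the per-path bound over $Q \in \tset$ to get $\sum_{Q\in\tset}\rev(Q) \le \frac12\paren{\sum_{Q\in\tset}\len(Q)} + 2\abs{\tset}$, and then bound $\abs{\tset}$ by $O(1/\delta)$. For the latter, note that each path in $\tset$ is created in Phase~2 because it induces a shortcut, and by construction every shortcut spans at least $\delta M$ gadgets; since the intervals $\{[s(Q),t(Q)] : Q \in \tset\}$ are pairwise disjoint (source-sink partition), there can be at most $1/\delta + O(1) = O(1/\delta)$ of them.

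\textbf{Main obstacle.} The only delicate point is justifying the ``at most $(j-i)/2$ combined revenue from the intermediate variable edges'' step: one must be sure the alternative routing via the shortcut is genuinely available in $G$ (it is, by the Phase-2 construction, which picks the shortcut precisely between vertices $v_i^{a_i}$ and $u_j^{a_j}$ that are \emph{on} $P$), and one must correctly separate the contribution of the two boundary gadgets $i$ and $j$ from the interior so as not to double-count — this is exactly the bookkeeping that produces the additive constant $2$ rather than something worse. Everything else is the same telescoping/disjoint-intervals argument already carried out for $\sset$.
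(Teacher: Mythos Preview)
Your proposal is correct and matches the paper's proof essentially line for line. The paper writes $Q = s_i\, e_i\, Q'\, e_j\, t_j$ with $e_i = u_i^{a_i}v_i^{a_i}$, $e_j = u_j^{a_j}v_j^{a_j}$, bounds $p(e_i)+p(e_j)\le 2$ by invoking Lemma~\ref{lemma: limited revenue per gadget} (since both edges were in $\rset$ after Phase~1), and bounds $\rev(Q')\le (j-i)/2$ via the induced shortcut, exactly as in your second paragraph; your ``else route $s_it_i$ / $s_jt_j$'' justification is precisely the content of that lemma, and it is valid here because the Phase-2 construction guarantees the boundary gadgets $G_i$ and $G_j$ sat in shortcut-free paths of $\rset$, so $s_i,t_i,s_j,t_j$ all lie on $P$.
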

\begin{proof}
Consider path $Q \in \tset$ from $s_i$ to $t_j$. Path $Q$ can be
written in the form: \[s_i \rightarrow u^{a_i}_i \Rightarrow
v^{a_i}_i \rightarrow \ldots \rightarrow u^{a_j}_j \Rightarrow
v^{a_j}_j \rightarrow t_j.\] Note that we do not assume any
structure of the path from $v^{a_i}_i$ to $u^{a_j}_j$. Also, recall
that edges $u^{a_i}_iv^{a_i}_i$ and $u^{a_j}_jv^{a_j}_j$ were in
$\rset$ after Phase~1 and moved to $\tset$ in Phase~2. Moreover,
there is a shortcut edge from $v^{a_i}_i$ to $u^{a_j}_j$ (which is
not in $Q$).

Now, let $Q'$ be the subpath of $Q$ from $v^{a_i}_i$ to $u^{a_j}_j$,
and $e_i, e_j$ be the edges $u^{a_i}_i v^{a_i}_i$ and $u^{a_j}_j
v^{a_j}_j$, respectively. Then $Q = s_i e_i Q' e_j t_j$. The revenue
collected on $Q$ comes from edges in $Q'$ and $e_i$ and $e_j$. Since
both $e_i$ and $e_j$ belonged to some paths in $\rset$ after Phase
1, we have $p(e_i) + p(e_j) \leq 2$ (cf. Lemma~\ref{lemma: limited
revenue per gadget}). Path $Q'$ can collect revenue of at most
$(j-i)/2$ due to the fact that there is a shortcut edge $v^{a_i}_i
u^{a_j}_j$ of cost $(j-i)/2$. Overall, the revenue on $Q$ is at most
$(j-i)/2+2 < \frac{1}{2}\len(Q) + 2$.
\end{proof}

Since every path $Q \in \tset$ induces some shortcut edges (i.e.,
there is a shortcut edge between some pairs of vertices in $Q$), the
length of such path is at least $\delta M$. Therefore, $\abs{\tset}
\leq O(1/\delta)$. We apply Claim~\ref{claim: limited revenue in S}
for every path in $\tset$ and sum them up. This immediately gives
the lemma.


\begin{figure}
\begin{center}
\includegraphics[angle=90, height=\textheight]{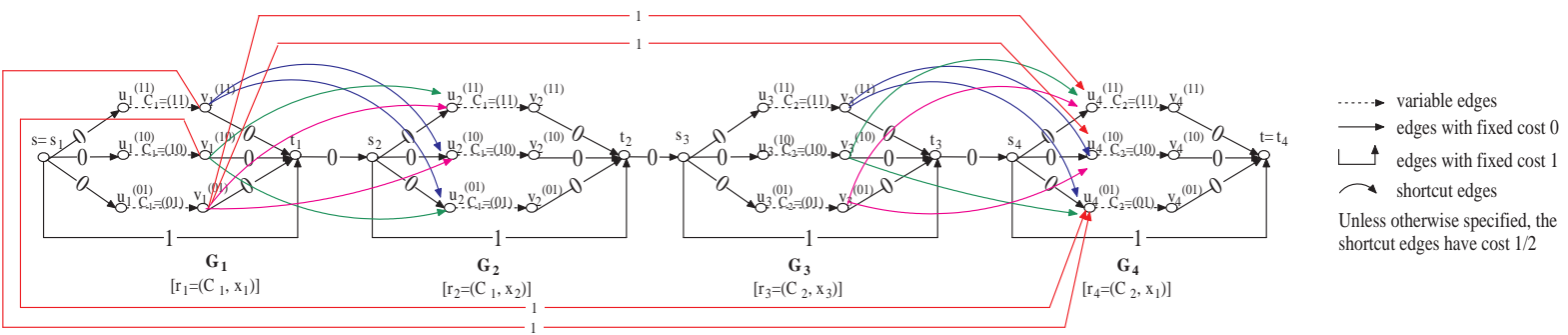}
\end{center}
\caption{\footnotesize Example of graph $G$ constructed from \twosat
$(x_1\vee x_2)\wedge (x_1\vee x_3)$ with $\ell=1$ repetition. Each
gadget $G_i$ is noted with the corresponding constraints $r_i$ and
each variable edge $u_i^av_i^a$ is noted with the corresponding
answer from Prover~1. Note that the corresponding answer from
Prover~2 can be identified easily. (For example, an edge
$u_1^{(10)}v_1^{(10)}$ corresponds to assigning $x_1=1$ and $x_2=0$.
Therefore, Prover~2's corresponding answer for
$u_1^{(11)}v_1^{(11)}$ is $x_1=1$.) }
\end{figure}

\end{document}